\definecolor{Gray}{gray}{0.85}
\definecolor{verylightgray}{rgb}{.97,.97,.97}
\definecolor{mygreen}{RGB}{24,141,31}
\definecolor{myred}{RGB}{142,0,8}
\definecolor{mypurple}{RGB}{107,29,111}
\newcommand{\hzy}[1]{\textcolor{green}{hzy: #1}}
\newtheorem{lemma}{Lemma}
\newcommand{\mpt}{\emph{MPT}}
\newcommand*\circlednum[1]{\raisebox{.4pt}{\textcircled{\raisebox{-.9pt} {#1}}}}
\definecolor{shitgreen}{RGB}{80,126,50}
\newcommand{\colorboxred}[1]{%
  \tikz[baseline=(char.base)]{
    \node[rectangle, fill=yellow, draw=shitgreen, inner sep=2pt, outer sep=0pt] (char) {\textcolor{red}{#1}};
  }%
}
\definecolor{ndssblue}{RGB}{133, 165, 195}
\newcommand{\colorboxblue}[1]{%
  \tikz[baseline=(char.base)]{
    \node[rectangle, fill=ndssblue, inner sep=2pt, outer sep=0pt] (char) {\textcolor{black}{#1}};
  }%
}
\newcommand{\ignore}[1]{}
\definecolor{verylightgray}{rgb}{.97,.97,.97}
\lstdefinelanguage{Solidity}{
	keywords=[1]{emit, anonymous, assembly, assert, balance, break, call, callcode, case, catch, class, constant, continue, contract, debugger, default, delegatecall, delete, do, else, event, export, external, false, finally, for, function, gas, if, implements, import, in, indexed, instanceof, interface, internal, is, length, library, log0, log1, log2, log3, log4, memory, modifier, new, payable, pragma, private, protected, public, pure, push, require, return, returns, revert, selfdestruct, send, storage, struct, suicide, super, switch, then, this, throw, transfer, true, try, typeof, using, value, view, while, with, addmod, ecrecover, keccak256, mulmod, ripemd160, sha256, sha3}, %
	keywordstyle=[1]\color{blue}\bfseries,
	keywords=[2]{address, bool, byte, bytes, bytes1, bytes2, bytes3, bytes4, bytes5, bytes6, bytes7, bytes8, bytes9, bytes10, bytes11, bytes12, bytes13, bytes14, bytes15, bytes16, bytes17, bytes18, bytes19, bytes20, bytes21, bytes22, bytes23, bytes24, bytes25, bytes26, bytes27, bytes28, bytes29, bytes30, bytes31, bytes32, enum, int, int8, int16, int24, int32, int40, int48, int56, int64, int72, int80, int88, int96, int104, int112, int120, int128, int136, int144, int152, int160, int168, int176, int184, int192, int200, int208, int216, int224, int232, int240, int248, int256, mapping, string, uint, uint8, uint16, uint24, uint32, uint40, uint48, uint56, uint64, uint72, uint80, uint88, uint96, uint104, uint112, uint120, uint128, uint136, uint144, uint152, uint160, uint168, uint176, uint184, uint192, uint200, uint208, uint216, uint224, uint232, uint240, uint248, uint256, var, void, ether, finney, szabo, wei, days, hours, minutes, seconds, weeks, years},	%
	keywordstyle=[2]\color{teal}\bfseries, 
	keywords=[3]{block, blockhash, coinbase, difficulty, gaslimit, number, timestamp, msg, data, gas, sender, sig, value, now, tx, gasprice, origin},	%
	keywordstyle=[3]\color{violet}\bfseries,
	keywords=[4]{onlyOwner},
	keywordstyle=[4]\color{red}\bfseries,
	identifierstyle=\color{black},
	sensitive=false,
	comment=[l]{//},
	morecomment=[s]{/*}{*/},
	commentstyle=\color{gray}\ttfamily,
	stringstyle=\color{red}\ttfamily,
	morestring=[b]',
	morestring=[b]"
}
\newcommand{\chain}{Ethereum}
\newcommand{\dosattack}{\textsc{Nurgle}}
\newcommand{\dosformula}[1]{\textsl{\footnotesize #1}}
\newcommand{\dossmallformula}[1]{\textsl{\scriptsize #1}}
\newcommand{\doscode}[1]{\texttt{\footnotesize #1}}
\newcommand{\dostinyformula}[1]{\textsl{\tiny #1}}
\def\BibTeX{{\rm B\kern-.05em{\sc i\kern-.025em b}\kern-.08em
    T\kern-.1667em\lower.7ex\hbox{E}\kern-.125emX}}
\begin{document}

\title{\dosattack{}: Exacerbating Resource Consumption in Blockchain State Storage via MPT Manipulation}

\author{
\IEEEauthorblockN{
Zheyuan He$^\dag$$^\P$\thanks{$^\P$Co-first authors.}, 
Zihao Li$^\ddag$$^\P$, 
Ao Qiao$^\dag$, 
Xiapu Luo$^\ddag$\textsuperscript{\Letter}\thanks{\textsuperscript{\Letter} Corresponding authors.},
Xiaosong Zhang$^\dag$\textsuperscript{\Letter},\\
Ting Chen$^\dag$\textsuperscript{\Letter},
Shuwei Song$^\dag$,
Dijun Liu$^\S$, and
Weina Niu$^\dag$
}

\IEEEauthorblockA{$^\dag$University of Electronic Science and Technology of China\quad $^\ddag$The Hong Kong Polytechnic University\quad $^\S$Ant Group}
}

\maketitle

\begin{abstract}
Blockchains, with intricate architectures, encompass various components, e.g., consensus network, smart contracts, decentralized applications, and auxiliary services. 
While offering numerous advantages, these components expose various attack surfaces, leading to severe threats to blockchains. 
In this study, we unveil a novel attack surface, i.e., the state storage, in blockchains. The state storage, based on the Merkle Patricia Trie, plays a crucial role in maintaining blockchain state.
Besides, we design \dosattack{}, the first Denial-of-Service attack targeting the state storage.
By proliferating intermediate nodes within the state storage, \dosattack{} forces blockchains to expend additional resources on state maintenance and verification, impairing their performance.
We conduct a comprehensive and systematic evaluation of \dosattack{}, including the factors affecting it, its impact on blockchains, its financial cost, and practically demonstrating the resulting damage to blockchains. 
The implications of \dosattack{} extend beyond the performance degradation of blockchains, potentially reducing trust in them and the value of their cryptocurrencies.
Additionally, we further discuss three feasible mitigations against \dosattack{}.
At the time of writing, the vulnerability exploited by \dosattack{} has been confirmed by six mainstream blockchains, and we received thousands of USD bounty from them.
\end{abstract}

\section{Introduction}
\label{sec_intro}

Blockchains, with intricate architecture, have evolved various components, e.g., 
consensus network,
smart contracts,
decentralized applications,
and auxiliary services~\cite{zhou2022sok}. These components expose various attack surfaces, leading to severe threats to blockchains~\cite{zhou2022sok}. Among these threats, the frequency and severity of Denial-of-Service (DoS) attacks have been rising~\cite{li2021deter}.
DoS attacks deny the service of corresponding components, and compromise the operations of blockchain.
For instance, the DoS incident~\cite{DosIncident1,fluffyosdi} on the consensus network induces a hard fork and abandons over 30 blocks (worth 8.6M USD) atop the Ethereum~\cite{wood2014ethereum}. 

\ignore{
\noindent\textbf{Existing works.} Academic and industrial communities continuously explore new attack surfaces of blockchain under DoS threats, involving four attack surfaces as follows. i) Consensus network~\cite{fluffyosdi, heopartitioning, prunster2022total,tran2020stealthier,saad2021syncattack, chen2022tyr, heilman2015eclipse,wang2022sdos}.
Based on the fault-tolerant mechanism, consensus network
assists blockchain nodes in achieving an agreement on the latest state. Compromising consensus network will incur the blockchain to violate its consensus functionalities~\cite{chen2022tyr}. As an example, Heo et al.~\cite{heopartitioning} isolate peers in the consensus network by occupying connections with the peers, and finally succeed in paralyzing the consensus network. ii) Txpools~\cite{li2021deter,saad2018poster,wu2020survive}. Txpools maintain pending transactions from blockchain users, and miners/validators will pack transactions from their txpools to blockchain.
Adversaries sway the security of blockchain by interfering with the transaction packing involved in txpools. For instance, DETER~\cite{li2021deter} exploits how Ethereum txpools handle pending transactions to disable txpools, and deny blockchain services. 
iii) Auxiliary services (AUX)~\cite{zhou2022sok, NguyenTc2023, li2021strong,yaish2023speculative}. 
Auxiliary services refer to entities facilitating blockchain's efficiency at off-chain.
DoS attackers can refuse users to utilize services provided by AUX.
For example, 
Li et al.~\cite{li2021strong} 
propose a DoS attack to paralyze RPC services, which causes users to be unable to access Ethereum.
iv) Smart contracts~\cite{ndss_broken_metro,chen2017adaptive,ghaleb2022etainter,grech2020madmax,suicide_dos,extcodesize_dos}. Smart contracts are automation programs executed in blockchain. One of their security issues 
is
under-priced opcodes~\cite{chen2017adaptive}, whose gas cost is substantially less than corresponding consumed resources. 
Attackers can lead blockchain to consume extremely high resources while executing under-priced opcodes.
For example, Broken Metre~\cite{ndss_broken_metro} constructs contracts consisting of under-priced opcodes to undermine Ethereum's efficiency.
}

 \begin{figure}[]
 
\small
	\centering
	\includegraphics[width=0.99\linewidth]{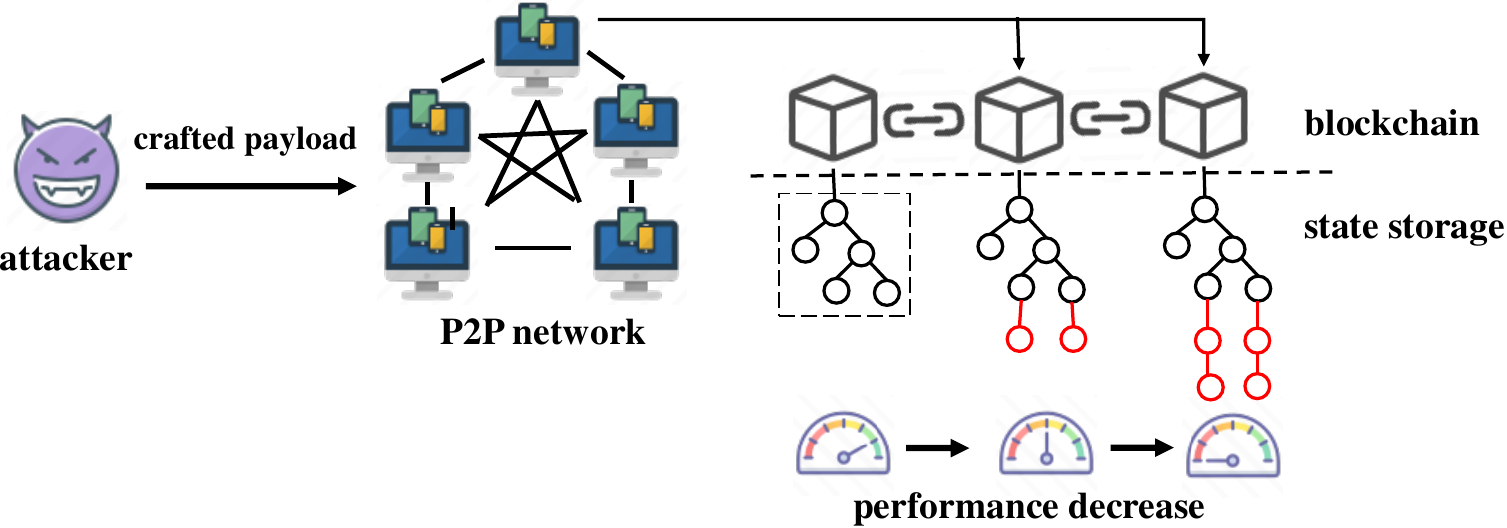}
	\caption{The workflow of \dosattack{}. The attacker first submits the crafted payload by transactions to the P2P network. Once the transactions are included in the blockchain, they can manipulate the state storage of the MPT structure, and finally impair the performance of the blockchain.}
	\label{fig_workflow}
\end{figure}

Academic communities continuously explore new attack surfaces of blockchain under DoS threats, involving four attack surfaces, i.e., consensus network, txpools, auxiliary services, and smart contracts (cf. details in \S\ref{sec_related}).
However, despite serving as the major performance
bottleneck of blockchain~\cite{Li2023LVMT,ponnapalli2021rainblock,ethanos}, DoS security concerning state storage of blockchain has never been explored.

To our best knowledge, 
we first reveal a new DoS attack surface of blockchain, i.e., state storage, which is used to maintain the blockchain state~\cite{offchain,ethanos}.
Blockchain state is the persistent data of the blockchain (e.g., account balance and contract 
persistent variables~\cite{offchain,bose2022sailfish}).
Besides, blockchain state is managed by a storage structure, typically Merkle Patricia Trie (MPT)~\cite{mptanalysis}. 
Hence, modifying the blockchain state consumes numerous resources.
Specifically,
the state storage of MPT structure (termed by \mpt{}) utilizes a tree structure to manage blockchain state. 
A leaf node in \mpt{} stores the value of persistent data (e.g., account balance), and all intermediate nodes in the path  
from the root node to the leaf node correspond to the key of the data's value
(e.g., account address)~\cite{Li2023LVMT}.
As the first systematic study on the security of blockchain state storage,
our study sheds light on the importance of securing the state storage,
and facilitates researchers and developers to propose more robust designs.

\noindent\textbf{Threat goals.} In our study, we propose \dosattack{}, a novel DoS attack towards blockchain state storage.
\dosattack{} aims to cripple blockchain's performance by raising time cost in interacting with its state storage. The design of \dosattack{} is inspired by our two observations:

\noindent
$\bullet$
Heavy burden of state maintenance (\S\ref{sec_observation_stateop}).
We categorize four classes of time-consuming blockchain operations (Table~\ref{tab_stateops}).
Besides,
we unevil that
the time cost of the operations significantly relies on the number of \mpt{} nodes involved within the state storage during blockchain execution.

\noindent
$\bullet$
Flaw of gas mechanism (\S\ref{sec_observation_gasflaw}).
Gas mechanism~\cite{gas_intro} is responsible for determining the financial cost incurred by users when accessing blockchain's resources.
However, this mechanism fails to consider intermediate nodes in state storage, which are updated when the state is modified (Fig.~\ref{gas_incon_demo}).

\ignore{
It reports that
81\% execution time of blockchain costs in interacting with
state storage~\cite{Li2023LVMT,ponnapalli2021rainblock,ethanos}. 
We further uncover that the time cost in interacting with state storage linearly increases with the number of involved nodes in state storage, leading the performance of blockchain heavily depend on the number of nodes in state storage.
We categorize four classes of operations in interacting with state storage in Table~\ref{tab_stateops}.
Our first observation motivates \dosattack{} to involve more nodes associated with the four classes of operations in state storage, resulting in degrading blockchain's execution performance and exacerbating consumed resources.

\noindent
$\bullet$
Flaw of gas mechanism (\S\ref{sec_observation_gasflaw}). Gas mechanism~\cite{gas_intro} deters adversaries from introducing infinite resource consumption to exploit blockchain by the cost of gas~\cite{gas_intro}. However, we reveal that the current gas mechanism fails to accurately reflect the actual consumed resource during the state modification in state storage.
Concretely, gas mechanism does not take into account intermediate nodes in state storage, which are updated when the state is modified (Fig.~\ref{gas_incon_demo}).
Our second observation inspires \dosattack{} to proliferate and involve more intermediate nodes with costing less gas fee, leading to degrading the execution performance of blockchain and exacerbating resource consumption.
}

According to
our two observations, \dosattack{} can proliferate intermediate nodes with few cost to increase extra resource consumption for maintaining the intermediate nodes, thereby degrading overall blockchain performance.
Fig.~\ref{fig_workflow} displays \dosattack{}'s workflow.
Specifically, \dosattack{} first constructs data and submits the crafted payload to the P2P network. Once the payload is included in blockchain, \dosattack{} can manipulate \mpt{}, and finally impair the performance of the blockchain.
Furthermore, as the manipulated \mpt{} persists on the blockchain, \dosattack{} persistently threatens the blockchain in all subsequent blocks. 

\ignore{The threat model of \dosattack{} contains an adversary with limited computing resources and analysis capabilities towards the storage structure of blockchain state and a victim blockchain with the state storage maintained in MPT structure (cf. details in \S \ref{sec_model}). }

\noindent\textbf{Attack scope.} 
Blockchains, adopting MPT~\cite{mptanalysis} for handling state storage~\cite{ruan2021blockchains,wei2022survey,Li2023LVMT}, are under threats of \dosattack{}.
In practice, MPT structure is widely applied in mainstream blockchain platforms, e.g., Ethereum~\cite{wood2014ethereum}, Binance Smart Chain (BSC)~\cite{bnb}, Polygon~\cite{polygon}, Avalanche~\cite{Avalanche}, Optimism~\cite{Optimism}, and Polkadot~\cite{wood2016polkadot}.
To our best knowledge, 588 blockchain platforms (cf. lists in Appendix \ref{sec_scope}) are under threats of \dosattack{}, which are compatible with Ethereum ecosystem~\cite{Ethereum_eco,yi2022blockscope}. In addition, other 153 blockchain platforms (cf. lists in Appendix \ref{sec_scope}) compatible with Polkadot ecology are also affected by \dosattack{}~\cite{Polkadot_eco} (\S \ref{sec_discuss}). 

\noindent\textbf{Attack design.} 
\dosattack{} strategically manipulates \mpt{} to impair blockchain performance. 
Specifically, \dosattack{} aims to raise \mpt{}'s depth by proliferating intermediate nodes, i.e., nodes between the root node and leaf nodes. When \mpt{}'s depth increases, 
blockchain consumes more resources to maintain expanded intermediate nodes in \mpt{}.

Within \mpt{},
which is organized as a prefix tree~\cite{mptanalysis}, each node is located by a unique value (denoted as \dosformula{indexing}, e.g., \doscode{acd3f} in Fig.~\ref{eth_stat}).
Besides, all \mpt{} nodes are ordered by the prefixes of their \dosformula{indexing}~\cite{mptanalysis} (\S\ref{sec_background}).
Hence, to proliferate intermediate nodes in a specific path of \mpt{},
\dosattack{} needs to craft a leaf node, whose \dosformula{indexing} contains a desired prefix (Fig.~\ref{simple_mpt_3}).
However, such a task is challenging.
This is because, for a leaf node storing blockchain state (e.g., account balance), its \dosformula{indexing} is derived from the keccak256 hash value~\cite{keccak256}  
of the information used to identify the leaf node (e.g., account address).
Hence,
to construct a leaf node whose \dosformula{indexing} contains a specific prefix, \dosattack{} needs to collide the 
\dosformula{indexing} of the leaf node, which is derived by keccak256 hash calculations~\cite{keccak256}.

Furthermore, in consideration of the trade-off between attack impact and cost,
we propose an optimized strategy to reduce \dosattack{}'s cost while retaining most of its original attack's impact (\S\ref{sec_design}). We achieve this by selectively deepening the leaf nodes that correspond to the most frequently accessed accounts (i.e., active accounts in \S\ref{sec_design}).

\ignore{There is a dilemma in colliding \dosformula{indexing} with a specific prefix.
Specifically,
constructing a deeper leaf node with a longer desired prefix leads to further more intermediate nodes, and causes extra resource consumption, but, it also requires more computing resources (e.g., GPUs) for adversaries to collide the prefix. 
Conversely, constructing a shallower leaf node with a shorter desired prefix causes limited resource consumption, although, it requires fewer computing resources. 
Besides, considering that there are billions of leaf nodes in \mpt{} (e.g., Ethereum's \mpt{}) for reserving blockchain state, it is non-trivial for \dosattack{} to proliferate intermediate nodes in all paths of \mpt{}.
To address the dilemma,
we choose to implement \dosattack{} to exploit the part of active accounts~\cite{ethanos} in \mpt{} (\S\ref{sec_design}).
This is because the leaf nodes, reserving the data of active accounts,
are located at the most frequently modified and accessed path of \mpt{}~\cite{ethanos}.
Hence, under this strategy, \dosattack{} can maximize the consumed resources of blockchain in maintaining and modifying the corresponding path and nodes in \mpt{}, under few resource cost of launching \dosattack{}.
}

\noindent\textbf{Evaluation.} 
To uniformly measure consumed resources of blockchain, e.g., CPU and disk, we utilize the time required for state modification as the metric~\cite{zheng18per},
and comprehensively and systematically evaluate \dosattack{} in four aspects.

First, we determine reasonable strategies of \dosattack{} with considering computing resources of launching attacks.
Please note that, with commodity hardware resources,
\dosattack{} needs to collide a keccak256 hash value with the specific prefix for manipulating \mpt{}. 
As a result,
under the computing resources of single RTX3080 GPU (Table~\ref{tab_preiamges}),
\dosattack{} can threaten \mpt{} by manipulating its structure at the depth of the first 15 layers (\S \ref{sec_evaluate_cap}), via colliding the first 13 nibbles of the hash values.

Second, 
we evaluate \dosattack{}'s impact on Ethereum from the block height of \#14.99M to \#15M.
As a result, with the computing resources of single RTX 3080 GPU, an adversary can persistently increase the time cost of state modification by 111\% of Ethereum (\S \ref{sec_evaluate_mainnet}) in a period of 10,000 blocks. %
We further propose models to estimate attack impact of \dosattack{}.
Our assessment shows that the models can estimate attack impact of \dosattack{} before launching it.

Third, we further evaluate the financial cost of \dosattack{} in exploiting seven popular blockchains. Table~\ref{tab_cost} enumerates the cost of \dosattack{}. It shows that the lowest cost of \dosattack{} is 39.64 USD while degrading the performance of Optimism in a period of 10,000 blocks.
Besides,
our optimization further reduces the cost of \dosattack{}.
Specifically,
by striking the active accounts in \mpt{} (\S\ref{sec_design}),
the cost of \dosattack{} can be further reduced
to 3.5\% of original cost with retaining 54.66\% of the original attack impact.
Our results indicate that, by targeting active accounts in \mpt{}, the adversary can optimize the cost of \dosattack{} to a reasonable range (\S \ref{sec_evaluate_eco}).

Fourth, we practically evaluate the effectiveness of \dosattack{} on Ethereum and BSC testnets. Please note that, unlike previous studies~\cite{ndss_broken_metro,heopartitioning,li2021deter} which only have non-persistent attack impacts, \dosattack{}'s impact persists in blockchain. 
Hence, due to ethical concerns, we evaluate the effectiveness of \dosattack{} in Ethereum and BSC testnets to minimize the potential negative impact. 
Besides, 
we carefully adjust attack parameters to light the attack impact. 
As a result,
when we witnessed that \dosattack{} caused the time of state modification on Ethereum (resp. BSC) testnet to increase by 15\% (resp. 18\%), we ceased the attack (\S \ref{sec_evaluate_testnet}).

\ignore{
\noindent\textbf{Implication.} We depict the four most serious consequences caused by \dosattack{} in the following. 
i) \dosattack{} persistently impairs the blockchain's performance for future blocks of the exploited blockchain. 
This is because,
even though \dosattack{}'s attack is finalized, the future state modification of the blockchain still suffers performance degradation once it involves intermediate nodes in \mpt{} proliferated by \dosattack{}.
ii) \dosattack{} delays blockchain users in using blockchain and AUX (e.g., flashbot~\cite{weintraub2022flash}, infura~\cite{li2021strong}, ENS~\cite{XiaENSIMC22}) in providing services, because \dosattack{} causes the waste of blockchain resource.
For example, 
blockchain users need to spend more 
resources in communicating with blockchain. 
Besides, 
an AUX like infura can only provide their services after they finish the delay of updating the latest state in \mpt{}.
iii) \dosattack{} threatens the consensus security of blockchain,
because it increases the cost of running blockchain nodes, thereby reducing the number of nodes served in the blockchain~\cite{heopartitioning}. 
iv) \dosattack{} also leads to a loss of trust in the blockchain platform, at least in the short term, resulting in a drop price of cryptocurrencies~\cite{ndss_broken_metro}.
}
At the time of writing, vulnerabilities under \dosattack{} have been confirmed by six blockchains (i.e., Ethereum, BSC, Polygon, Optimism, Avalanche, and Ethereum Classic), and we received thousands of USD bounty from them.

\noindent\textbf{Contributions} of this work are listed as follows

\begin{itemize}[leftmargin=*,topsep=1pt]

\item \textit{Novel attack at new attack surface.} Based on a new attack surface (i.e., the state storage of blockchain), we propose a novel DoS attack, \dosattack{}. By manipulating the MPT structure of state storage, \dosattack{} can persistently aggravate the consumed resources during blockchain state modification, including CPU, memory, and disk resources.

\item \textit{New observations.} To our best knowledge, we are the first to categorize the four classes of heavy time-consuming operations of blockchain.
Besides,
we reveal the flaw of gas mechanism, i.e., it fails to accurately reflect the actual consumed resources of state modification in \mpt{}.
Our two observations further inspire the design and the mitigation strategies of \dosattack{}.

\item \textit{New understandings.} We conduct a comprehensive and systematic evaluation of \dosattack{},
including 
assessing factors affecting \dosattack{},
evaluating the attack impact of \dosattack{},
measuring financial cost of \dosattack{} in exploiting seven mainstream blockchains, 
and validating the effectiveness of \dosattack{} on two testnets.
Our experimental results demonstrate that \dosattack{} can widely 
exploit various mainstream blockchain platforms, leading to a significant degradation of blockchain's performance at a reasonable cost.
Furthermore, we discuss the severe implications brought by \dosattack{}. 

\item \textit{Mitigations.} We elaborate on three classes of feasible mitigations that can reduce the attack impact of \dosattack{}, and discuss their advantages and disadvantages.

\end{itemize}

We release materials of our work in \url{https://github.com/hzysvilla/Nurgle_Oakland24} for future research.  

\section{Background}
\label{sec_background}

We introduce basic concepts of blockchain (\S\ref{sec_background_blockchain}), explain how blockchain adopts the MPT structure to maintain its state storage (\S\ref{sec_background_state}), and provide an example to illustrate how to exacerbate consumed resource in \mpt{} (\S\ref{sec_background_example}).

\subsection{Blockchain basic concepts}
\label{sec_background_blockchain}

We use the implementation of Ethereum~\cite{wood2014ethereum} to introduce the basic knowledge of blockchain, because \chain{} is a widely used blockchain platform. 
The native cryptocurrency of Ethereum is Ether.
According to its specification~\cite{wood2014ethereum}, the basic structure of its data is the block, which comprises the block header and block body. The block header involves a reference to the preceding block and the information used for state validation~\cite{wood2014ethereum}, while the block body contains a sequence of transactions. The transactions are signed by users to transfer funds and communicate with smart contracts.

There are two types of accounts in Ethereum, i.e., smart contract accounts (CA) and externally owned accounts (EOA). An EOA is controlled by a private key held by a user, while a CA holds the pre-defined logic and persistent variables. A contract is a Turing-complete automation program on Ethereum. The execution of contracts is facilitated by the Ethereum Virtual Machine (EVM), which is an underlying component of Ethereum supporting a set of instructions~\cite{wood2014ethereum}. 

The gas mechanism~\cite{gas_intro} establishes the cost associated with users utilizing the blockchain's resources. 
For example, each operation in Ethereum, which modifies the state data, will introduce the cost of gas,
e.g.,
executing contracts and transferring funds (e.g., Ether)~\cite{wood2014ethereum}.

\ignore{
\subsection{Blockchain}
\label{sec_background_blockchain}

We utilize the implementation of Ethereum~\cite{wood2014ethereum} to introduce the basic knowledge of blockchain. \chain{} is a widely used blockchain platform. According to its specifications~\cite{wood2014ethereum}, the basic structure of its data is the block, which comprises the header and body of the block. The block header involves a reference to the preceding block and the information used for state validation~\cite{wood2014ethereum}, while the block body contains a sequence of transactions. These transactions are signed by blockchain users to transfer funds and communicate with smart contracts.

There are two types of accounts in Ethereum, i.e., contract accounts (CA) and externally owned accounts (EOA). An EOA is controlled by a private key held by a user, while a CA holds the pre-defined logic and persistent variables. A contract is a Turing-complete automation program on Ethereum. The execution of contracts is facilitated by the Ethereum Virtual Machine (EVM), which is an underlying component of Ethereum supporting a set of instructions~\cite{wood2014ethereum}. 
Ethereum's native cryptocurrency is Ether. 

The gas mechanism~\cite{gas_intro} establishes the cost associated with users utilizing the blockchain's resources. 
For example, each operation in Ethereum, which modifies the state, will introduce the cost of gas,
e.g.,
executing contracts and transferring funds (e.g., Ether) will consume gas~\cite{wood2014ethereum}. 
}

\subsection{Blockchain state storage of MPT structure}
\label{sec_background_state}

In this study,
we focus on blockchains like Ethereum, whose state storage is organized in MPT structure (i.e., \mpt{}), comprising account data and contract data~\cite{ethanos}.

In Fig.~\ref{eth_stat}, we display the state storage of MPT structure (left), and its flat layout (right). Account Data \colorboxred{1} maintains all accounts’ information, e.g., balance, nonce, code hash, and storage root of each account. 
Besides, the account information of each account is separately reserved in a unique leaf node.
In Account Data \colorboxred{1}, the account information of each account is mapped by the \dosformula{indexing} of the leaf node reserving the account information.
As mentioned in \S\ref{sec_intro}, the \dosformula{indexing} (e.g., \doscode{abc6d}) is used to locate a leaf node in the MPT structure~\cite{mptanalysis}.
If an account is a contract, its storage root \colorboxred{5} in Account Data \colorboxred{1} points to its Contract Data \colorboxred{2}. 
Contract Data \colorboxred{2} contains the data of the contract to store persistently~\cite{wood2014ethereum}.
Specifically,
Contract Data \protect\colorboxred{2} is a 
mapping from the slots~\cite{ayub2023storage} of a contract's storage to the values stored in corresponding slots~\cite{popular_con}.
In MPT structure, each data, mapping to a contract storage slot, is reserved in a unique leaf node, and the \dosformula{indexing} of the leaf node is derived from its slot.
In addition,
each contract has an independent
Contract Data \colorboxred{2}~\cite{wood2014ethereum}.
Account Data \colorboxred{1} and Contract Data \colorboxred{2} are maintained in State Trie \colorboxred{3} and Storage Tries \colorboxred{4}, respectively.
Besides, both State Trie \colorboxred{3} and Storage Tries \colorboxred{4} are in MPT structure for managing and verifying state.

MPT structure manages and indexes data by compressing the prefix tree~\cite{szpankowski1990patricia}.
In \mpt{}, the \dosformula{indexing} of leaf nodes is a 256-bit byte array~\cite{wood2014ethereum}, i.e., containing 64 nibbles.
Please note that, we denote the length of \dosformula{indexing} as the number of nibbles (i.e., half byte) in it.
Taking the \dosformula{indexing} \doscode{abc6d} (Fig.~\ref{eth_stat}) as an instance, the length of its \dosformula{indexing} \doscode{abc6d} is 5.
To obtain the data reserved in the leaf node, we %
need to
locate the leaf node with the \dosformula{indexing} \doscode{abc6d} by searching for the intermediate nodes holding the prefix of \doscode{abc6d} (i.e., the three intermediate nodes \doscode{a}, \doscode{bc}, and \doscode{6d} in Fig.~\ref{eth_stat}).

\begin{figure}[!t]

\small
	\centering
	\includegraphics[width=0.99\linewidth]{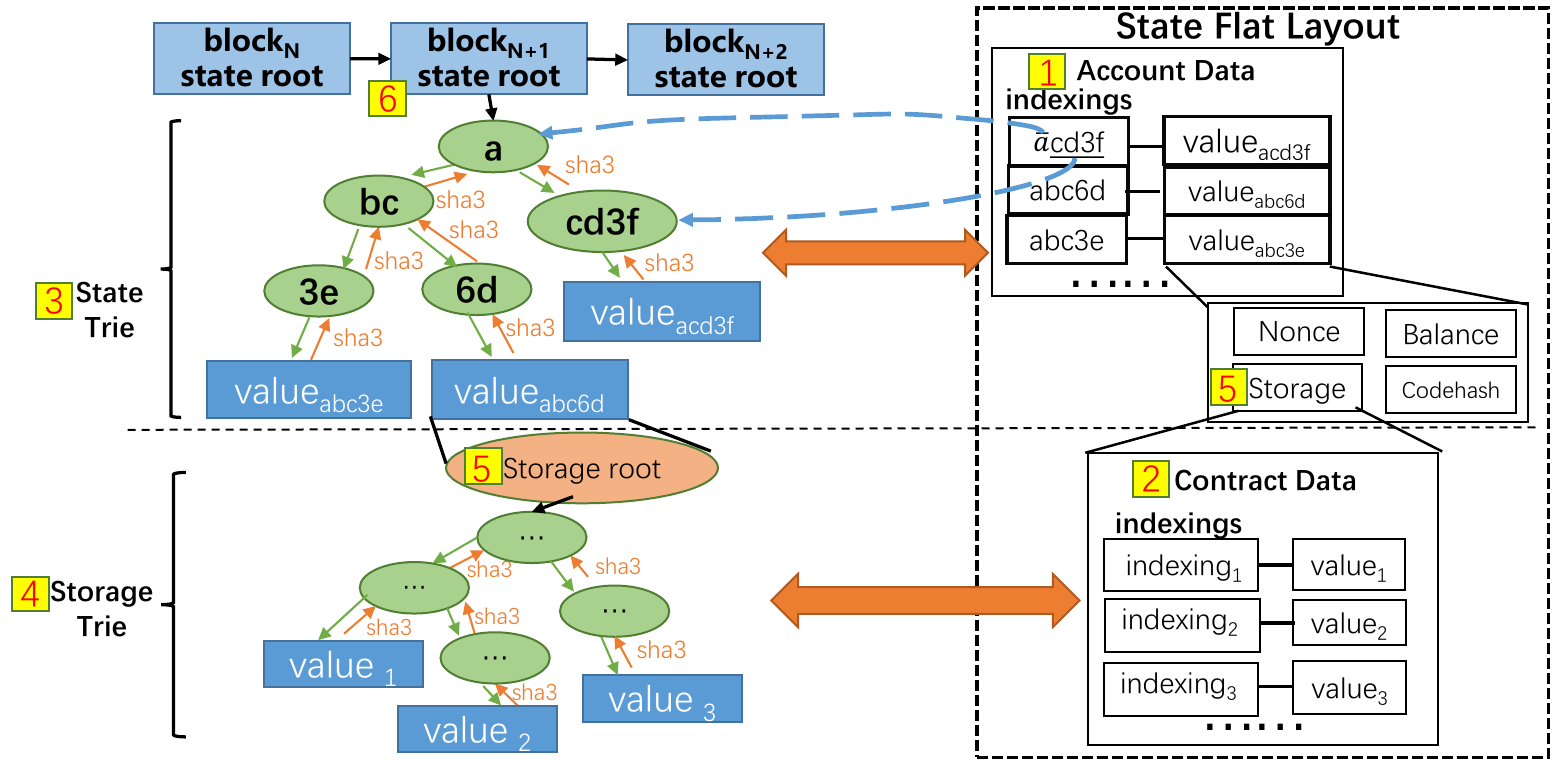}
	\caption{The left part shows state storage in the MPT structure. We display the flat layout of corresponding information at the right part to facilitate intuitive understanding.
}
	\label{eth_stat}
\end{figure}

\mpt{} performs the state verification for consensus in the idea of the Merkle tree~\cite{szydlo2004merkle}. Each parent node maintains the keccak256 hash value of all its child nodes' data. The keccak256 hash value of the root node atop State Trie \colorboxred{3} is used as the metadata of a block's header, named as state root \colorboxred{6}. The orange arrow in Fig.~\ref{eth_stat} displays the procedure of verifying state. When the state of an account (e.g., Ether balance) changes, since the account's data is stored on a leaf node, it will force all the hash values from the leaf node to the root node to be changed, and the information within nodes in the path is required to be updated accordingly. Taking the account with the \dosformula{indexing} \doscode{acd3f} in Fig.~\ref{eth_stat} as an example, when the information of the account changes, the hash values maintained by nodes \doscode{a} and \doscode{cd3f} will be changed, resulting in both the two nodes to be updated.

\begin{figure}[!t]
\small
	\centering
	\includegraphics[width=0.74\linewidth]{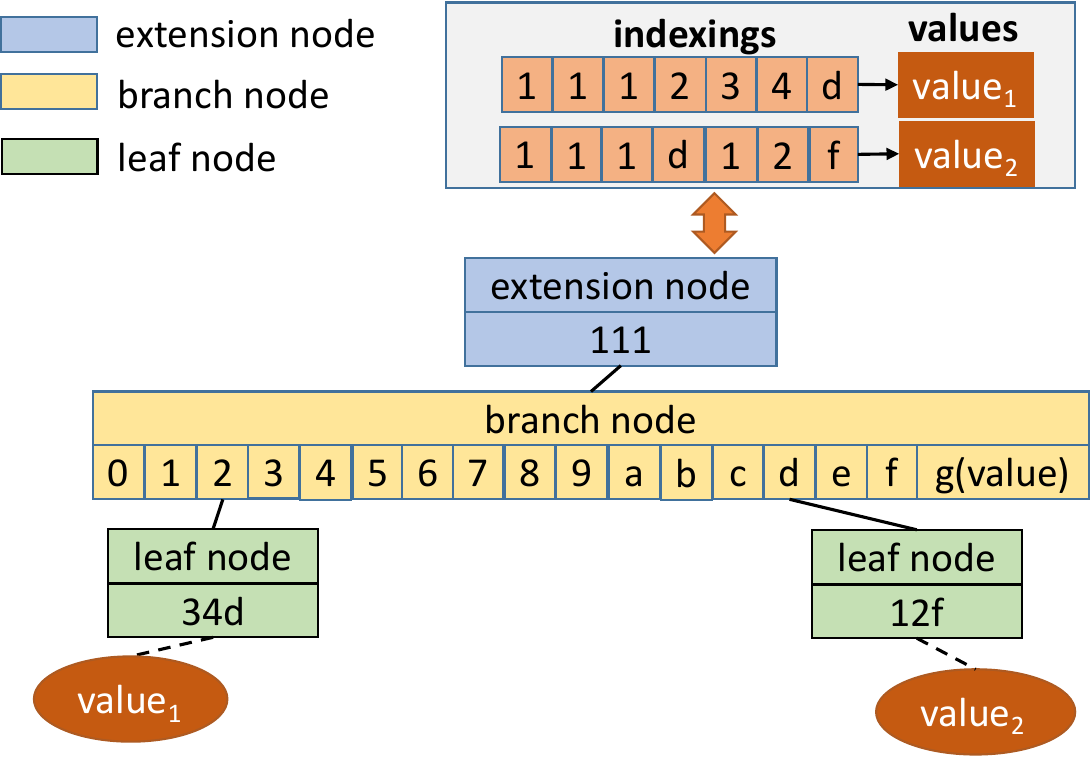}
        \vspace{2pt}
	\caption{A simplified \mpt{}, constructed by two leaf nodes with \dosformula{indexing} as \doscode{111234d} and \doscode{111d12f}. The extension node (\doscode{111}) handles the common prefix (\doscode{111}) of the two leaf nodes' \dosformula{indexing}. The branch node forks to point the two leaf nodes by two pointers (i.e., \doscode{2} and \doscode{d}). Finally, the two leaf nodes keep the unique part of their \dosformula{indexing} as \doscode{34d} and \doscode{12f}.
}
	\label{simple_mpt_}
\end{figure}

\subsection{Exacerbating consumed resource in \mpt{}}
\label{sec_background_example}

We depict the detailed design of \mpt{} in \chain{}, and illustrate an example of MPT manipulation, which causes the resource consumption of \chain{} to be exacerbated.
There are 
three types of nodes in \chain{}'s \mpt{}, i.e., branch nodes, extension nodes, and leaf nodes. A branch node stores up to 16 pointers (from pointer \doscode{0} to pointer \doscode{f}).
A pointer can point to a leaf node, an extension node, or another branch node. An extension node reserves a byte sequence, which is used to point to nodes with the common hash sequence. An extension node also contains a pointer to its child node. A leaf node stores a pointer targeting the data reserved in the leaf node (e.g., Ether balance). We deliver a simplified \mpt{} in Fig.~\ref{simple_mpt_}, where there are two leaf nodes storing data. The \dosformula{indexing} of the two leaf nodes are \doscode{111234d} and \doscode{111d12f}. The common prefix of the two \dosformula{indexing} is \doscode{111}, which is handled by the extension node \doscode{111}. The branch node forks to point the two leaf nodes by utilizing the two pointers \doscode{2} and \doscode{d}, respectively.
Finally, the two leaf nodes keep the unique part of their \dosformula{indexing}, i.e., \doscode{34d} and \doscode{12f}.

Inserting and deleting nodes can trigger the node splitting and collapse in \mpt{}, which leads to extra resource consumption~\cite{xiaoju2020ebtree}. Fig.~\ref{simple_mpt_3} illustrates the node splitting triggered by inserting a leaf node (whose \dosformula{indexing} is \doscode{111d1f3}) into the \mpt{}. Since the \dosformula{indexing} \doscode{111d1f3} of the inserted leaf node has a common prefix (i.e., \doscode{111d1}) with the \dosformula{indexing} \doscode{111d12f} of an existing leaf node in the \mpt{}. Therefore, the leaf node (i.e., \doscode{12f}) is split into an extension node \doscode{1}, a branch node (containing two pointers as \doscode{2} and \doscode{f}), and two leaf nodes (i.e., \doscode{f} and \doscode{3}).
During the whole procedure of inserting the leaf node \doscode{111d1f3}, it additionally consumes resources (e.g., CPU, memory, and disk resources) for maintaining and verifying the newly generated intermediate nodes.
Furthermore,
the node collapse will happen by deleting leaf nodes from \mpt{}.
Taking the right part of Fig.~\ref{simple_mpt_3} as an example, during deleting the leaf node \doscode{111d1f3} from the \mpt{}, it additionally consumes resources to discard nodes (which are marked with red dotted lines), and re-verify all involved nodes.

\section{Threat model}
\label{sec_model}

\begin{figure}[!t]
\small
	\centering
	\includegraphics[width=0.99\linewidth]{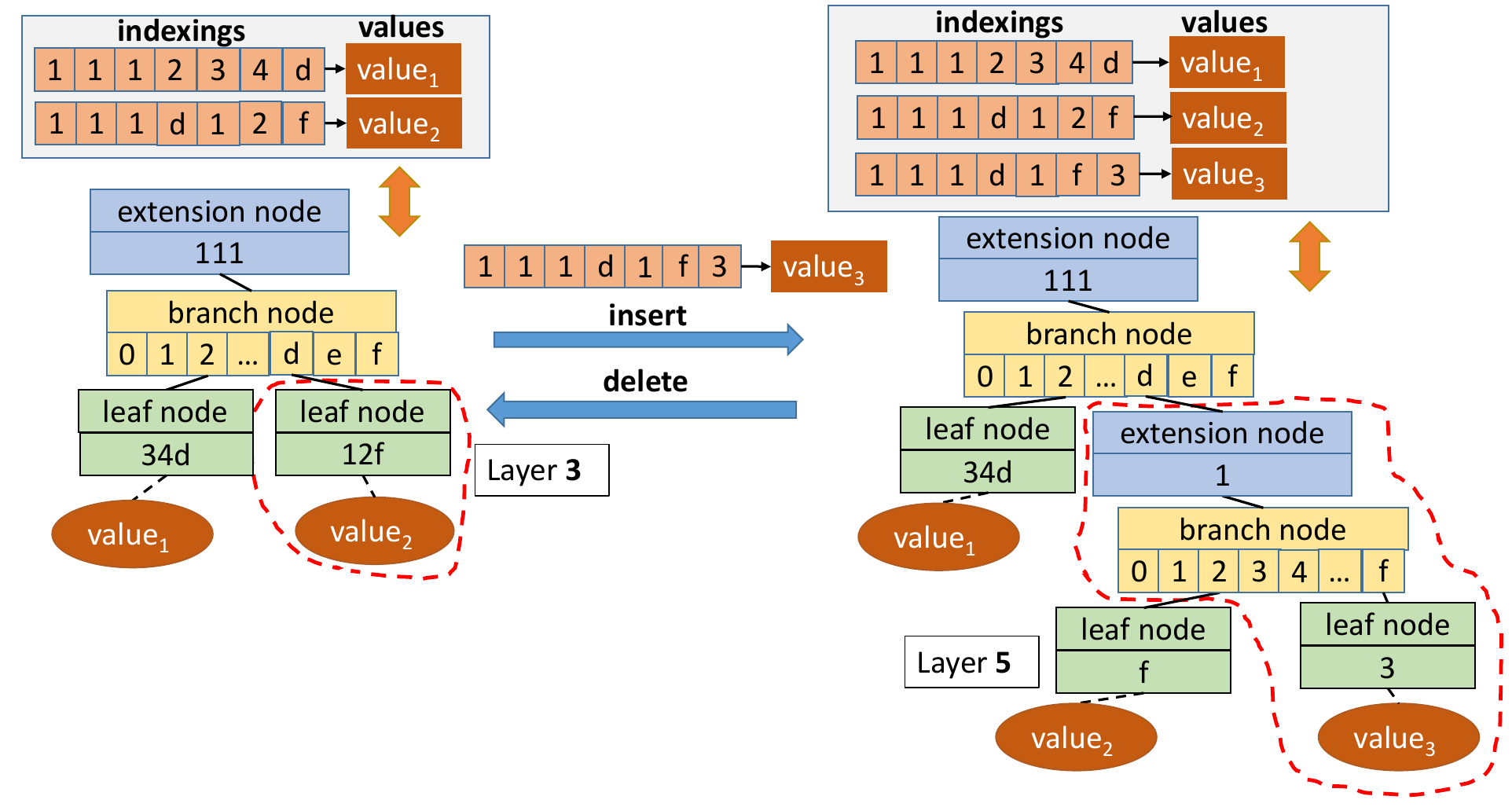}
	\caption{After inserting a leaf node \doscode{111d1f3} into the \mpt{}, the leaf node \doscode{12f} splits into an extension node, a branch node, and two leaf nodes. Finally, the \mpt{} is proliferated with new nodes, and its depth is deepened into two layers. Newly generated nodes are marked with red dotted lines.}
	\label{simple_mpt_3}
\end{figure}

The threat model of \dosattack{} involves two actors: an adversary and a victim blockchain.
The adversary submits crafted payload by transactions to the victim blockchain, resulting in impairing the blockchain’s performance.
In the threat model,
the victim blockchain supports the execution of smart contracts,
and adopts the MPT structure to maintain and update its state storage.
Besides, there is a P2P network atop the victim blockchain, accepting users to submit transactions to be included in the victim blockchain.

The adversary controls an externally owned account (EOA) and necessary assets for sending transactions to the P2P network of the victim blockchain.
Besides, the adversary controls a modified client of the victim blockchain, by which the adversary can monitor and analyze the status quo of the victim blockchain's \mpt{} to construct the attack payload.
Moreover, the adversary has limited resources, e.g., GPUs, for constructing the payload to mount attack.

It is reasonable for a financially rational adversary to launch \dosattack{} for two reasons. i) \dosattack{} can yield economic returns by arbitraging during the price volatility of cryptocurrency assets on the victim blockchain~\cite{ndss_broken_metro}. For example, the adversary can first utilize \dosattack{} to impair the performance of the victim blockchain and weaken the trust of related ecology, leading to a decline in the asset price of the cryptocurrency. After that, the adversary can just conduct the short selling~\cite{saffi2011price} on the corresponding cryptocurrency assets to obtain a considerable profit. ii) Adversaries running the blockchain can utilize \dosattack{} to disrupt competitors. By striking competitors, it drives the customer base of the victim, flocking to the adversaries~\cite{ndss_broken_metro}.

The cost of \dosattack{} consists of two parts, i.e., the fee of computing resources and the gas fee. The major fee of computing resources comes from GPUs, which are used to collide a leaf node with a desired prefix for inserting it into a target position of the \mpt{} (cf. details in \S \ref{sec_design}). It is necessary to undertake gas fee for the adversary because the adversary needs to submit transactions with crafted payload in the victim blockchain for launching \dosattack{} (\S \ref{sec_design}).

\section{Observation of blockchain state storage}
\label{sec_observation}

In this section, we illustrate two observations, i.e., the heavy burden of state maintenance (\S \ref{sec_observation_stateop}) and the flaw of gas mechanism (\S \ref{sec_observation_gasflaw}), which inspires the design of \dosattack{}.

\begin{table*}[!t]
\caption{Four operations that involve heavy burden of state maintenance}
\label{tab_stateops}
\centering
\resizebox{0.99\linewidth}{!}{
\begin{tabular}{@{}l|l|l@{}}
\toprule
Operation type                           & Description  & Major consumed resources                                                                                                 \\ \midrule \midrule 
\textbf{[OP1]} MPT update              & 
\begin{tabular}[c]{@{}l@{}}During the state modification, the \mpt{} will update nodes in \mpt{} (i.e., nodes\\ in State Trie \colorboxred{3}, and Storage Tries \colorboxred{4}) that are required to be
modified. \end{tabular}         &      Disk read, and
memory read and write                \\ \midrule
\textbf{[OP2]} MPT verification           & 
\begin{tabular}[c]{@{}l@{}}To verify whether \mpt{} holds the latest state, \mpt{} derives the keccak256 hash\\ value of the root node of the whole State Trie \colorboxred{3} from all other nodes in \mpt{}\\ at a bottom-up manner. \end{tabular} & CPU computation  \\ \midrule
\textbf{[OP3]} MPT holding in memory  & 
\begin{tabular}[c]{@{}l@{}}
To mitigate consumed disk resource in accessing nodes in \mpt{}, the blockchain\\ holds partial nodes in \mpt{} (e.g., nodes in State Trie \colorboxred{3} and Storage Tries \colorboxred{4}) in\\ memory, and can determine which nodes holding in memory to be discarded.
\end{tabular}
& CPU computation, and memory read and write
\\ \midrule
\textbf{[OP4]} MPT persistence              &  \mpt{} persistently stores nodes representing the latest state of blockchain into disk. & Disk write \\ \bottomrule
\end{tabular}
}
\vspace{4pt}
\end{table*}

\subsection{Heavy burden of state maintenance}
\label{sec_observation_stateop}

Operations with \mpt{} (e.g., maintaining, verifying, modifying, and assessing state in \mpt{}) are the major 
performance bottleneck of blockchain.
Previous studies~\cite{Li2023LVMT,ponnapalli2021rainblock,ethanos} report that over 81\% execution time of blockchain costs in interacting with \mpt{}.
To make it worse, we further uncover that the time cost of major time-consuming operations in interacting with \mpt{} linearly increases with the number of nodes in \mpt{} involved in these operations, leading the performance of blockchain heavily depend on the number of nodes in \mpt{}.
We categorize the operations into four classes in Table~\ref{tab_stateops} and depict them as follows.

\begin{itemize}[leftmargin=*]
    \item \textbf{OP1 (MPT update).} Blockchain updates nodes in \mpt{} involved in state modification (e.g., node splitting in~\S\ref{sec_background_example}). 
    In \textbf{OP1}, both nodes, corresponding to Account Data \colorboxred{1} in State Trie \colorboxred{3} and Contract Data \colorboxred{2} in Storage Tries \colorboxred{4}, can be modified.
    For example, when an account's balance changes, the leaf node (corresponding to the account) in State Trie \colorboxred{3} will update its reserved data, e.g., updating to the account's latest balance.
    \textbf{OP1} consumes disk read, and memory write and read, because intermediate nodes and leaf nodes associated with the account can be absent in current memory, and they have to be accessed from the disk accordingly in such cases.
    \item \textbf{OP2 (MPT verification).} According to \S\ref{sec_background_state}, to achieve the consensus of blockchain state, \mpt{} needs to derive the keccak256 hash value of the root node of the whole State Trie \colorboxred{3} from all other nodes in \mpt{} at a bottom-up manner (\S\ref{sec_background_state}). We denote the operation for computing the hash value of the root node as MPT verification (\textbf{OP2}), since the hash value is used to verify whether \mpt{} holds the latest state during consensus.
    Specifically,
    \textbf{OP2} will first derive the hash value of the root of each Storage Trie \colorboxred{4} from all nodes in the Storage Trie \colorboxred{4}, and then calculate the hash value of the root node of the whole State Trie \colorboxred{3} (\S\ref{sec_background_state}). 
    \textbf{OP2} consumes expensive CPU resource to calculate the keccak256 hash value for each node~\cite{ponnapalli2021rainblock}. 
    \item \textbf{OP3 (MPT holding in memory).} 
    To ease consumed disk resource in accessing nodes in \mpt{}, blockchain holds partial nodes of \mpt{} (e.g., nodes in State Trie \colorboxred{3} and Storage Tries \colorboxred{4}) in memory, and can determine which nodes holding in memory to be discarded from memory. We denote the operation for holding nodes of \mpt{} in memory as \textbf{OP3}.
    Besides, 
    discarded nodes are expired nodes that have been replaced by other latest nodes during state changes~\cite{state_prune}. 
    Since discarded nodes will not be written to disk, \textbf{OP3} mainly consumes CPU and memory resources for maintaining nodes of \mpt{} in memory.
    \item \textbf{OP4 (MPT persistence).} The \mpt{} persistently stores nodes representing the latest state of blockchain into disk (\textbf{OP4}), and generates heavy overhead of disk writes. 
\end{itemize}

\begin{figure}[!b]
\small
	\centering
	\includegraphics[width=0.96\linewidth]{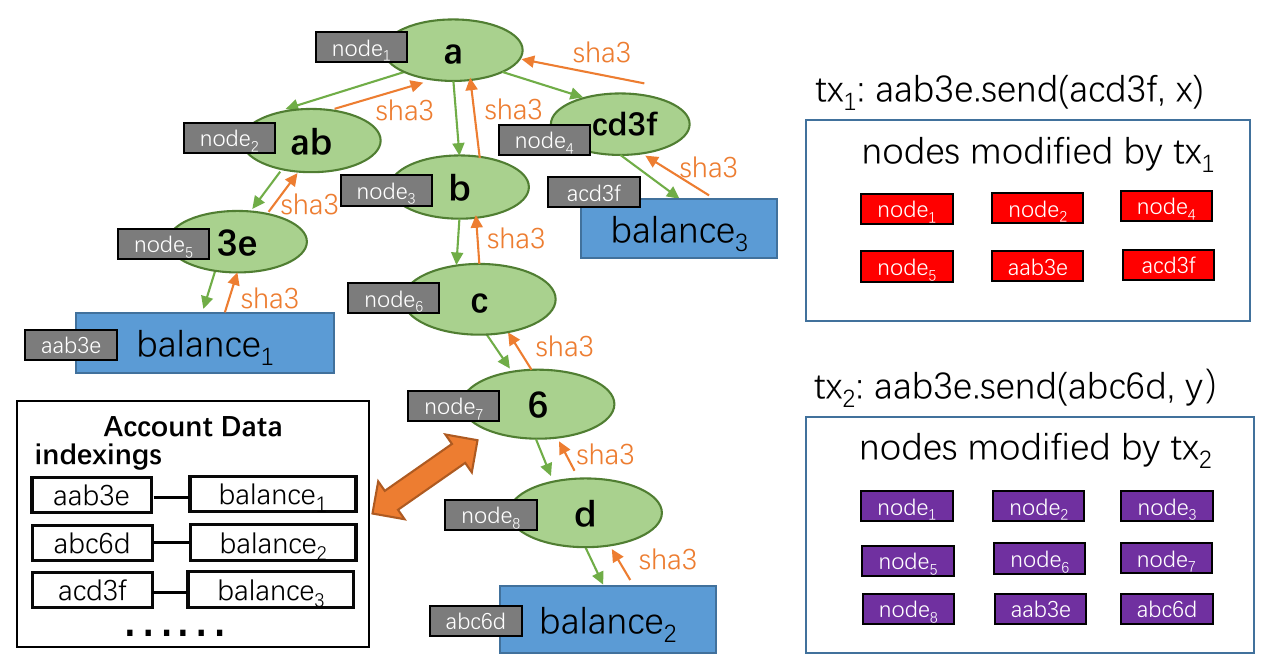}
	\caption{There are two transactions \dosformula{$\dosformula{tx}_1$} and \dosformula{$\dosformula{tx}_2$} transferring Ether, and each of them consumes 21,000 units of gas. However, when executing \dosformula{$\dosformula{tx}_1$} and \dosformula{$\dosformula{tx}_2$}, the number of nodes in \mpt{} to be updated (\S\ref{sec_background_state}) is different (marked as red and purple boxes in Fig.~\ref{gas_incon_demo}). After finalizing the two transactions, although the two transactions cost the same amount of gas, they consume different resources for updating the \mpt{}.}
	\label{gas_incon_demo}
\end{figure}

Please note that, \textbf{OP1-4} can trigger each other.
For Fig.~\ref{simple_mpt_3} as an example, 
once the intermediate nodes and leaf nodes are modified in \textbf{OP1}, the keccak256 hash value of them should be updated, including the root node of the whole State Trie \colorboxred{3} (\textbf{OP2}). Besides, since new nodes are generated in \mpt{}, the nodes of \mpt{} in memory can also be replaced and discarded (\textbf{OP3}). Furthermore, the newly generated nodes can also be written into disk for persistently maintaining (\textbf{OP4}).
In addition, 
the time complexity to finalize all \textbf{OP1-4} is $\mathcal{O}(n)$, where $n$ is the number of nodes in \mpt{} involved in \textbf{OP1-4}.
It means that the more MPT nodes need to be maintained, the more resources will be consumed.
Our first observation assists \dosattack{} to exacerbate consumed resources by involving more nodes in \mpt{} to increase more time cost for handling the four classes of operations.

\subsection{Flaw of the gas mechanism}
\label{sec_observation_gasflaw}

To our best knowledge, we are the first to uncover the flaws of gas mechanism in the view of state storage. 
We use an example in Fig.~\ref{gas_incon_demo} to illustrate the flaw of gas mechanism revealed by us. There are three EOAs in Fig.~\ref{gas_incon_demo}. The leaf nodes reserving the three EOAs' information are located by their \dosformula{indexing} as \doscode{aab3e}, \doscode{abc6d}, and \doscode{acd3f} (marked as blue boxes). 
In the example, there are two transactions (i.e., \dosformula{$\dosformula{tx}_1$} and \dosformula{$\dosformula{tx}_2$}), where \dosformula{$\dosformula{tx}_1$} transfers Ether from \doscode{aab3e} to \doscode{acd3f}, and \dosformula{$\dosformula{tx}_2$} transfers Ether from \doscode{aab3e} to \doscode{abc6d}.
According to Ethereum's specifications~\cite{wood2014ethereum}
, each of \dosformula{$\dosformula{tx}_1$} and \dosformula{$\dosformula{tx}_2$} costs 21,000 units of gas. 
However, the resources consumed by the two transactions are different. 
Specifically, 
during the execution of \dosformula{$\dosformula{tx}_1$}, six nodes (marked as red boxes) in the \mpt{} are required to be updated.
However, during the execution of \dosformula{$\dosformula{tx}_2$}, nine nodes (marked as purple boxes) are required to be updated.
Therefore, although the two transactions cost the same amount of gas, the amount of resources consumed by them is different for updating the \mpt{}.
Similarly, the flaw of gas mechanism can also be observed in updating the data reserved in contract storage (Storage Tries \colorboxred{4}). The root cause for the flaw is that the current gas mechanism (e.g., gas mechanism of Ethereum~\cite{wood2014ethereum}) does not consider the exact resources consumed for maintaining and verifying state in \mpt{} (e.g., the resources for modifying intermediate nodes). 
Our second observation inspires \dosattack{} to exacerbate consumed resources of state modification by introducing more intermediate nodes involved in the state modification.

\section{The design and implementation of \dosattack{}}
\label{sec_design_implement}

\begin{figure*}[!t]
\small
  \begin{subfigure}{0.33\linewidth}
    \centering
    \includegraphics[width=0.99\linewidth]{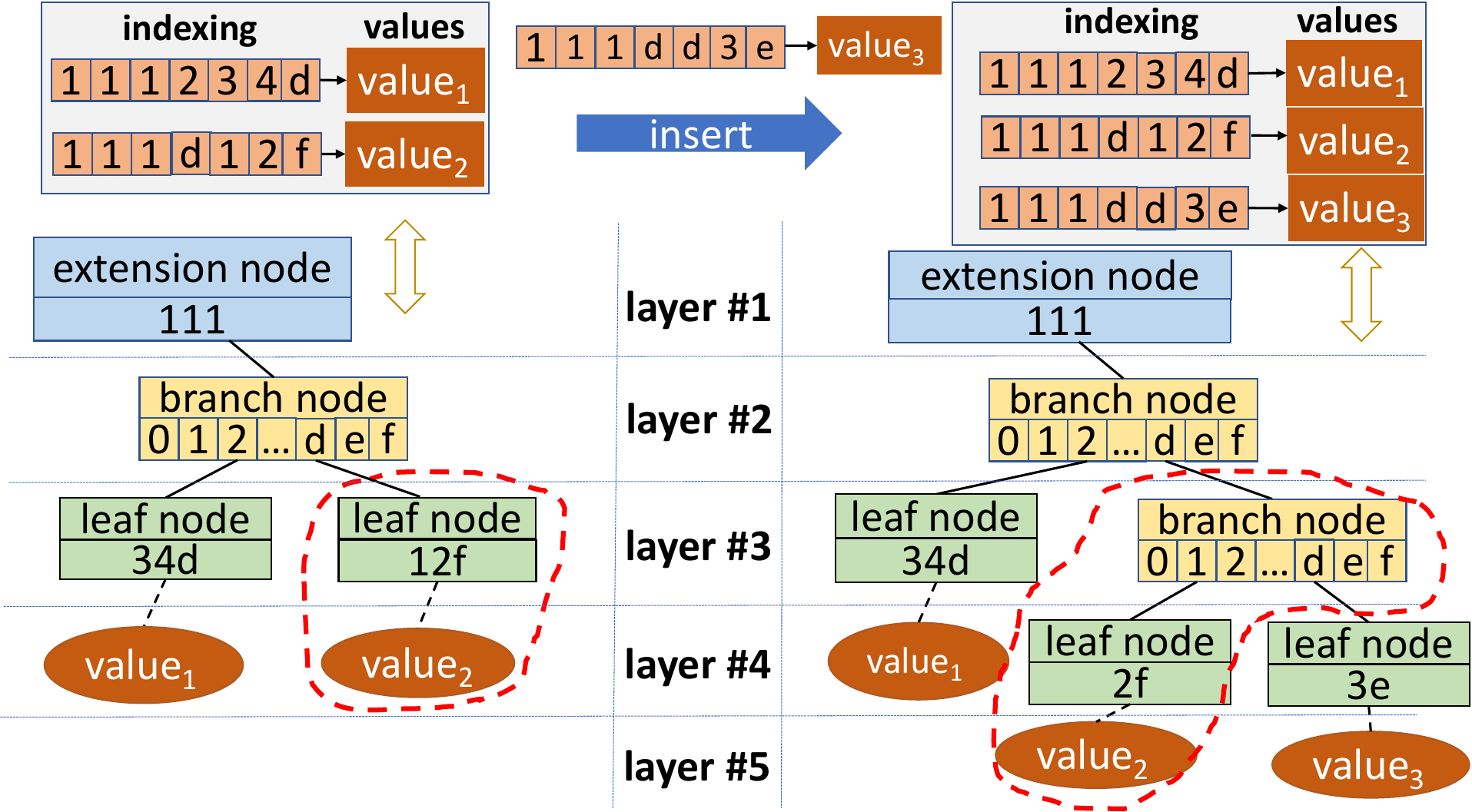}
    \caption{An example of Strategy 1}
    \label{fig_split_case1}
  \end{subfigure}
  \begin{subfigure}{0.33\linewidth}
    \centering
    \includegraphics[width=0.99\linewidth]{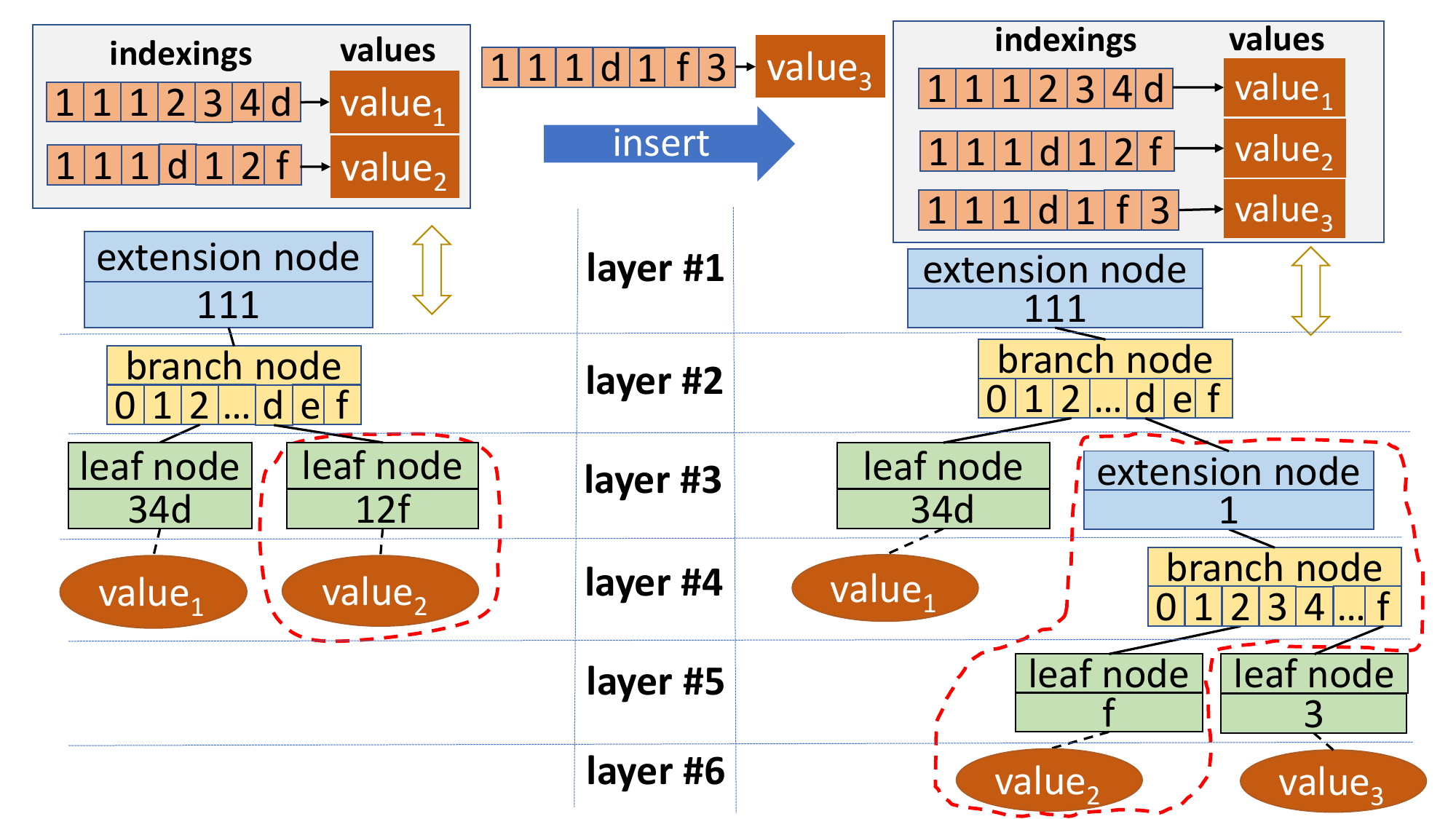}
    \caption{An example of Strategy 2}
    \label{fig_split_case2}
  \end{subfigure}
   \begin{subfigure}{0.33\linewidth}
    \centering
    \includegraphics[width=0.99\linewidth]{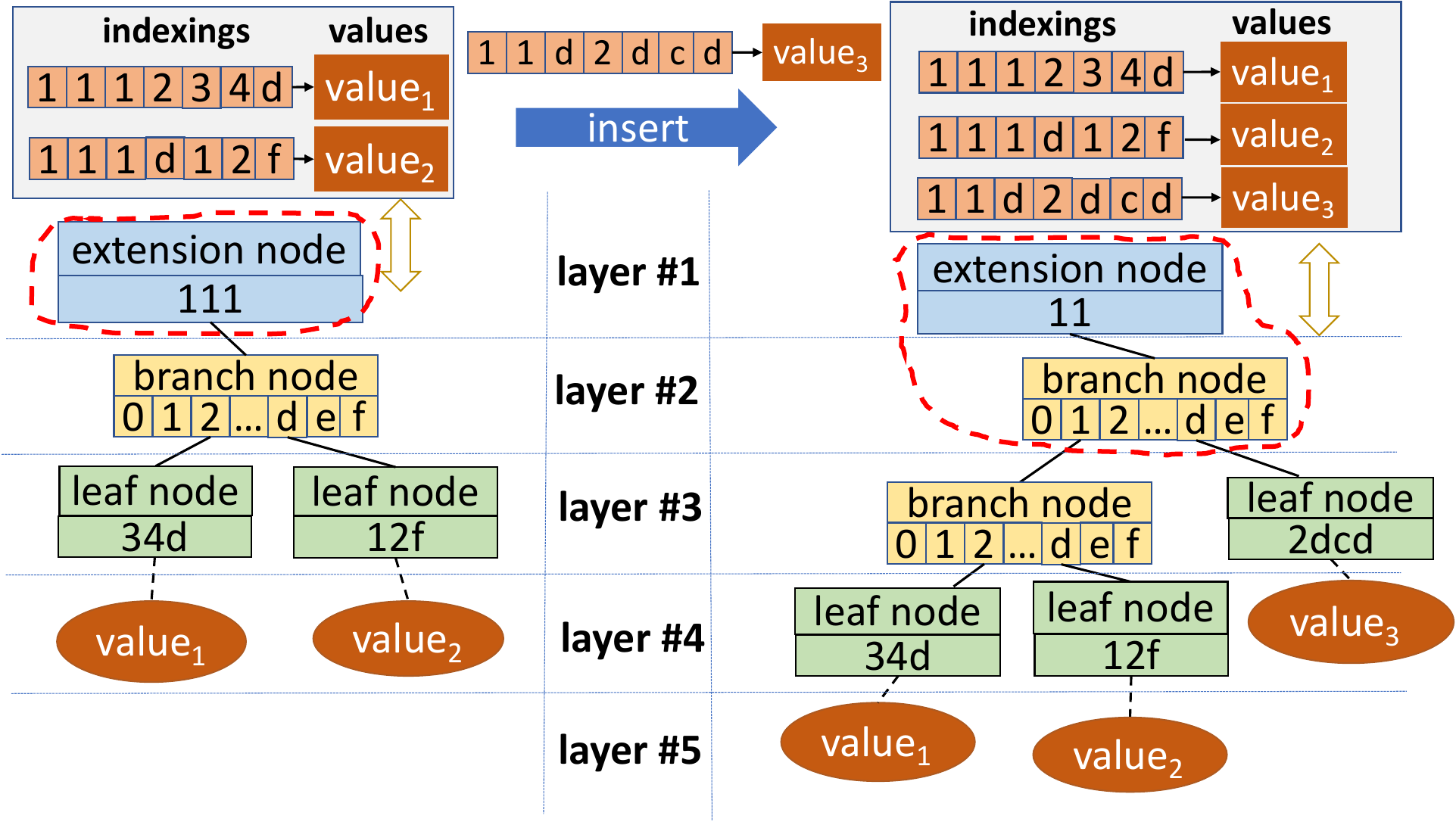}
    \caption{An example of Strategy 3}
    \label{fig_split_case3}
  \end{subfigure}
  
  \vspace{2pt}
  \caption{Examples of utilizing the three strategies (\textbf{S1-3}). Nodes before and after splitting are marked with red dotted lines.}
  \vspace{1pt}
  \label{fig_split_cases}
\end{figure*}

\begin{algorithm}[!b]
\caption{\dosattack{}}
\label{alg_nurgle}
\SetKwRepeat{Do}{do}{while}
\scriptsize{
  \KwIn{\dossmallformula{node$_{\dossmallformula{v}}$}, the leaf node to be deepened in \mpt{}}
  \KwIn{\dossmallformula{node$_{\dossmallformula{root}}$}, the root node of \mpt{}}
  \KwOut{\dossmallformula{nodes$_{\dossmallformula{insert}}$}, the leaf nodes to be inserted for deepening \dossmallformula{node$_{\dossmallformula{v}}$}}

\SetArgSty{text} 
\dossmallformula{nodes$_{\dossmallformula{insert}}$} $\leftarrow$ []\\
\dossmallformula{nodes$_{\dossmallformula{collided}}$} = \texttt{\scriptsize HashCollision(}\dossmallformula{node$_{\dossmallformula{v}}$}\texttt{\scriptsize )}

\Do{New \dossmallformula{node$_{\dossmallformula{insert}}$} has appended into \dossmallformula{nodes$_{\dossmallformula{insert}}$}}{
    \dossmallformula{nodes$_{\dossmallformula{intermediate}}$} = \texttt{\scriptsize TraverseNodes(}\dossmallformula{node$_{\dossmallformula{root}}$}, \dossmallformula{node$_{\dossmallformula{v}}$}\texttt{\scriptsize )} \\
    \For{\dossmallformula{node} $\in$ \dossmallformula{nodes$_{\dossmallformula{intermediate}}$}}{  
    \uIf{\texttt{\scriptsize Type(}\dossmallformula{node}\texttt{\scriptsize )} \textbf{is} "extension"} {
         \uIf{\texttt{\scriptsize IsSplittable(}\dossmallformula{node}\texttt{\scriptsize )}} {
            \dossmallformula{node$_{\dossmallformula{insert}}$} = \texttt{\scriptsize MatchS3(}\dossmallformula{node}, \dossmallformula{nodes$_{\dossmallformula{collided}}$}\texttt{\scriptsize )}\\
            \uIf{\dossmallformula{node$_{\dossmallformula{insert}}$} $\neq$ \texttt{\scriptsize null}}{
                \dossmallformula{nodes$_{\dossmallformula{insert}}$}.append(\dossmallformula{node$_{\dossmallformula{insert}}$})
            }
        }
    }
    \uElseIf{\texttt{\scriptsize Type(}\dossmallformula{node}\texttt{\scriptsize )} \textbf{is} "leaf"}{
        \uIf{\texttt{\scriptsize IsSplittable(}\dossmallformula{node}\texttt{\scriptsize )}} {
            \dossmallformula{node$_{\dossmallformula{insert}}$} = \texttt{\scriptsize S2Match(}\dossmallformula{node}, \dossmallformula{nodes$_{\dossmallformula{collided}}$}\texttt{\scriptsize )}\\
            \uIf{\dossmallformula{node$_{\dossmallformula{insert}}$} $\neq$ \texttt{\scriptsize null}}{
                \dossmallformula{nodes$_{\dossmallformula{insert}}$}.append(\dossmallformula{node$_{\dossmallformula{insert}}$})
            }
            \Else{
                \dossmallformula{node$_{\dossmallformula{insert}}$} = \texttt{\scriptsize S1Match(}\dossmallformula{node}, \dossmallformula{nodes$_{\dossmallformula{collided}}$}\texttt{\scriptsize )}\\
            \uIf{\dossmallformula{node$_{\dossmallformula{insert}}$} $\neq$ \texttt{\scriptsize null}}{
                \dossmallformula{nodes$_{\dossmallformula{insert}}$}.append(\dossmallformula{node$_{\dossmallformula{insert}}$})
            }
            }
        }
    }
    
    }
}
\KwRet \dossmallformula{nodes$_{\dossmallformula{insert}}$}
}
\end{algorithm}

In this section, we elaborate on the design of \dosattack{} (\S \ref{sec_design}) and \dosattack{}'s detailed implementation (\S \ref{sec_implement}).

\subsection{The design of \dosattack{}}
\label{sec_design}

Inspired by our two observations (\S\ref{sec_observation}),
\dosattack{} aims to expand intermediate nodes of \mpt{} by inserting leaf nodes into desired positions of \mpt{}. 
Specifically, to expand the intermediate nodes in \mpt{}, we utilize the node splitting (\S\ref{sec_background_example}) triggered while inserting leaf nodes into \mpt{}. 
After that, the expanded intermediate nodes will increase the consumed resources for the operations of modifying, updating, and verifying nodes in \mpt{} (e.g., \textbf{OP1-4} in \S\ref{sec_observation_stateop}).

Inserting a leaf node in \mpt{} by different strategies can trigger node splitting in different ways, causing distinct results (\S\ref{sec_background_example}). In the following, we categorize three strategies, which split nodes and deepen a target leaf node (\dosformula{node$_{v}$}) by inserting a leaf node (\dosformula{node$_{\dossmallformula{insert}}$}) in different ways.
We denote that, the length of the whole \dosformula{indexing} of \dosformula{node$_{\dossmallformula{v}}$} is \dosformula{m}, and the length of the unique part of \dosformula{node$_{v}$} is \dosformula{n}. For the leaf node \doscode{acd3f} in Fig.~\ref{eth_stat} as an example, the length of its whole \dosformula{indexing} and the length of its unique part are 5 and 4, respectively.

\noindent
$\bullet$ \textbf{S1}.
\dosformula{node$_{\dossmallformula{v}}$} splits into a leaf node, and a branch node, when the length of the common prefix between the \dosformula{indexing} of \dosformula{node$_{\dossmallformula{v}}$} and \dosformula{node$_{\dossmallformula{insert}}$} equals to \dosformula{m-n}.
\textbf{S1} deepens \dosformula{node$_{v}$} by 1 layer, and adds an intermediate node. 
For example,
when inserting the leaf node \doscode{111dd3e} (Fig.~\ref{fig_split_case1}), leaf node (whose unique part is \doscode{12f}) splits into a branch node (containing the two pointers \doscode{1} and \doscode{d}), and a leaf node (whose unique part is \doscode{2f}). 

\noindent
$\bullet$ \textbf{S2}.
\dosformula{node$_{\dossmallformula{v}}$} splits into an extension node, a branch node, and a new leaf node, when the length of the common prefix between the \dosformula{indexing} of \dosformula{node$_{\dossmallformula{\dossmallformula{v}}}$} and \dosformula{node$_{\dossmallformula{insert}}$} is larger than \dosformula{m-n}.
\textbf{S2} deepens \dosformula{node$_{v}$} by 2 layers, and adds two intermediate nodes.
For example,
when inserting the leaf node \doscode{111d1f3} (Fig.~\ref{fig_split_case2}), the leaf node (whose unique part is \doscode{12f}) splits into an extension node \doscode{1}, a branch node,
and a leaf node (whose unique part is \doscode{f}). 

\noindent
$\bullet$ \textbf{S3}.
An extension node in the path from the root node to \dosformula{node$_{v}$}
splits into a branch node and a new extension node, when i) the common prefix between the \dosformula{indexing} of \dosformula{node$_{\dossmallformula{\dossmallformula{v}}}$} and \dosformula{node$_{\dossmallformula{insert}}$} cannot cover the prefix maintained in the extension node, and ii) the length of the prefix maintained in the extension node is larger than 1.
\textbf{S3} deepens \dosformula{node$_{v}$} by 1 layer, and adds an intermediate node.
For example,
when inserting leaf node \doscode{11d2dcd} (Fig.~\ref{fig_split_case3}), extension node (maintaining \doscode{111}) splits into an extension node \doscode{11} and a branch node.

Algorithm~\ref{alg_nurgle} presents the process of \dosattack{}.
\dosattack{} crafts a list of leaf nodes (i.e., \dosformula{nodes$_{\dossmallformula{insert}}$}). 
By inserting in \mpt{},
\dosformula{nodes$_{\dossmallformula{insert}}$} expand intermediate nodes by node splitting and deepen \dosformula{node$_{\dossmallformula{v}}$}.
Specifically,
for a given leaf node \dosformula{node$_{\dossmallformula{v}}$}, under a predefined timeout,
\dosattack{} collides the \dosformula{indexing} of \dosformula{node$_{\dossmallformula{v}}$} with aiming to maximize the length of common prefix between \dosformula{node$_{\dossmallformula{v}}$} and collided nodes (Line 2).
After the timeout, \dosattack{} collects all collided nodes in \dosformula{node$_{\dossmallformula{collided}}$}.
In Line 4,
\dosattack{} then retrieves all nodes in the path of \mpt{} from the root node of \mpt{} to \dosformula{node$_{\dossmallformula{v}}$} (i.e., \dosformula{nodes$_{\dossmallformula{intermediate}}$}).
\dosattack{} iterates nodes in \dosformula{nodes$_{\dossmallformula{intermediate}}$} to determine whether they can be split by node splitting (Line 7 and 12).
If a node in \dosformula{nodes$_{\dossmallformula{intermediate}}$} can be split, \dosattack{} generates the leaf node \dosformula{node$_{\dossmallformula{insert}}$} for splitting the node, by matching whether the three strategies are satisfied (Line 8, 13, and 17).
\dosattack{} then collects \dosformula{node$_{\dossmallformula{insert}}$} into \dosformula{nodes$_{\dossmallformula{insert}}$} (Line 10, 15 and 19).
\dosattack{} will continue the whole procedure (Line 4 - Line 19), until there are no more new leaf nodes for node splitting.
It is worth noting that, during the whole procedure (Algorithm~\ref{alg_nurgle}), to improve the efficiency of \dosattack{}, we also record the nodes that can not be split, which helps \dosattack{} to directly skip these nodes in Line 7 and 12.

For a leaf node in \mpt{}, Lemma~\ref{lem_attack} guarantees that \dosattack{} can deepen the leaf node by triggering leaf splitting. 

\begin{lemma}
\label{lem_attack}
Given a leaf node \dosformula{node}$_{\dossmallformula{v}}$ that the length of the unique part of its \dosformula{indexing} is greater than 2, if \dosattack{} can collide out nodes \dosformula{nodes}$_{\dossmallformula{insert}}$ whose common prefix length with \dosformula{node}$_{\dossmallformula{v}}$ is at most \dosformula{x}, then \dosformula{node}$_{\dossmallformula{v}}$ can be deepened up to the layer \dosformula{x+2} (where \dosformula{x+2} is no larger than the maximum depth of \mpt{}) by inserting \dosformula{nodes}$_{\dossmallformula{insert}}$ in \mpt{}.
\hfill \qedsymbol
\end{lemma}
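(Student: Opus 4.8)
The plan is to argue by induction on the number of inserted leaf nodes, using the three splitting strategies \textbf{S1}--\textbf{S3} from \S\ref{sec_design} as the atomic steps. The key quantitative fact I would establish first is that each insertion from \dosformula{nodes}$_{\dossmallformula{insert}}$ can be chosen to deepen \dosformula{node}$_{\dossmallformula{v}}$ by at least one layer, so that a sequence of such insertions accumulates depth. The subtlety the bound $x+2$ captures is the asymmetry between the strategies: splitting an extension node (\textbf{S3}) or performing the \textbf{S1} split on the (possibly already-split) target leaf costs one layer each, while the single \textbf{S2} split on \dosformula{node}$_{\dossmallformula{v}}$ contributes two layers because it introduces both an extension node and a branch node above the leaf. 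So the target path grows by $1$ per newly exposed prefix nibble, plus a final ``free'' $+2$ from the terminal \textbf{S2}.

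First I would set up the bookkeeping. Let the unique part of \dosformula{node}$_{\dossmallformula{v}}$ have length $n \geq 3$ (by hypothesis $>2$), and let $d_0$ be its current depth, so $m = d_0 + n$ where $m = d_0 + n$ is the full \dosformula{indexing} length. For any collided node \dosformula{node}$_{\dossmallformula{insert}} \in$ \dosformula{nodes}$_{\dossmallformula{insert}}$ whose common-prefix length with \dosformula{node}$_{\dossmallformula{v}}$ is some $\ell \leq x$, I would check case by case which of \textbf{S1}, \textbf{S2}, \textbf{S3} applies: if $\ell = m - n$ we are in \textbf{S1} (branch node inserted exactly at the fork, $+1$ layer); if $\ell > m-n$ and $\ell < m$ we are in \textbf{S2} ($+2$ layers, one extension plus one branch); and if $\ell$ lands strictly inside an extension node already on the path with that extension's label length $>1$, we are in \textbf{S3} ($+1$ layer). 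The induction step is: given that \dosattack{} has already deepened \dosformula{node}$_{\dossmallformula{v}}$ to some layer $k < x$ using a prefix of length $k$ worth of collisions, a collided node matching one more nibble of the prefix lets us apply \textbf{S1} or \textbf{S3} to push the fork down by one, reaching layer $k+1$; this is exactly the loop in Algorithm~\ref{alg_nurgle} (Lines 4--19) re-traversing the lengthened path. Iterating from $k = m-n$ up to $k = x$ uses collisions of prefix length up to $x$ and reaches layer $x$; then one final \textbf{S2} split on the residual leaf (whose unique part still has length $\geq 2$, which is where the hypothesis $n>2$ is needed so that after $x-(m-n)$ single-layer splits there remain at least two nibbles to fork on) adds the last $2$ layers, giving depth $x+2$. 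The side condition $x+2 \leq$ maximum \mpt{} depth just guarantees the process is not truncated by the $64$-nibble ceiling.

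The main obstacle I anticipate is verifying that the intermediate nodes actually remain \emph{splittable} at every step, i.e.\ that \texttt{IsSplittable} in Algorithm~\ref{alg_nurgle} keeps returning true along the whole path --- in particular that after an \textbf{S1} or \textbf{S3} split the newly created branch node still leaves a child leaf or extension with enough residual label length ($\geq 1$ for the next \textbf{S1}/\textbf{S3}, $\geq 2$ for the terminal \textbf{S2}) to host the next collision. This is a structural invariant that must be maintained through the induction, and it is the reason the statement restricts to $n > 2$ and bounds the final depth at $x+2$ rather than, say, $x+n$. I would phrase the invariant as ``after $j$ insertions the target leaf sits at depth $(m-n)+j$ for $j \leq x-(m-n)$, and its residual unique part has length $n-j \geq 2$,'' prove it is preserved by one more \textbf{S1}/\textbf{S3} step, and then note that the terminal \textbf{S2} is applicable precisely when $n - j \geq 2$. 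Everything else --- that \dosattack{} can in fact collide such prefixes within the timeout, and the exact node counts --- is either assumed (the collision capability is a hypothesis of the lemma) or a routine consequence of the \mpt{} splitting mechanics already spelled out in \S\ref{sec_background_example}.
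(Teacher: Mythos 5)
Your proposal is correct and takes essentially the same route as the paper: the paper's (much terser) proof likewise argues that the \textbf{S1}--\textbf{S3} splits place each of the first \dosformula{x} collided nibbles of \dosformula{node}$_{\dossmallformula{v}}$'s \dosformula{indexing} into its own intermediate node, so that \dosformula{node}$_{\dossmallformula{v}}$ sits at layer \dosformula{x+1}, and then applies one final \textbf{S1} split with common prefix length exactly \dosformula{x} to reach layer \dosformula{x+2}; your variant of finishing with an \textbf{S2} ($+2$ layers) from layer \dosformula{x} instead of an \textbf{S1} ($+1$) from layer \dosformula{x+1} is just a different accounting of the same process. The only slight imprecision is your invariant that the residual unique part has length exactly $n-j$ after $j$ insertions --- an \textbf{S3} split of an extension node above deepens the leaf without consuming its unique part, so the invariant should be stated as a lower bound ($\geq n-j$) --- but the inequality errs in the harmless direction and does not affect the conclusion.
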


\begin{proof}[Proof of Lemma~\ref{lem_attack}]
Under node splitting triggered by \dosattack{}, the first \dosformula{x} nibbles of \dosformula{indexing} of \dosformula{node}$_{\dossmallformula{v}}$ will be maintained in \dosformula{x} intermediate nodes, because leaf nodes crafted by \dosattack{} can deepen \dosformula{node}$_{\dossmallformula{v}}$ (\textbf{S1-2}) and split the extension nodes whose maintained prefix length is larger than 1 (\textbf{S3}), and the length of pointers maintained in a branch node is 1 (\S\ref{sec_background_example}).
Hence, \dosformula{node}$_{\dossmallformula{v}}$ locates in the layer \dosformula{x+1} in \mpt{} (i.e., \dosformula{x} intermediate nodes are in front of \dosformula{node}$_{\dossmallformula{v}}$).
\dosattack{} can then deepen \dosformula{node}$_{\dossmallformula{v}}$ by 1 layer (i.e, deepening to the layer \dosformula{x+2}), with the length of the common prefix between the \dosformula{indexing} of \dosformula{node}$_{\dossmallformula{v}}$ and the inserted leaf node equaling to \dosformula{x} (\textbf{S1}).
\end{proof}

To trigger node splitting for a leaf node, \dosattack{} needs to craft leaf nodes, whose \dosformula{indexing} has a common prefix with the leaf node, and insert crafted leaf nodes in \mpt{} (\textbf{S1-3}).
However, it is challenging, because \dosformula{indexing} of the leaf node is derived from keccak256 hash computation, and hash algorithm is irreversibility~\cite{ramya2020securing} (\S\ref{sec_intro}).
To address the challenge, we design new methods to craft leaf nodes triggering node splitting by colliding the prefix of target leaf node's \dosformula{indexing}.
Specifically,
based on the parallel computing of GPUs,
we adopt OpenCL library~\cite{OpenCL,create2crunch} to collide the \dosformula{indexing} by parallelized computing keccak256 hash (Appendix~\ref{sec_collsion}).

\noindent
\textbf{Multi-target hash collision.}
There are multiple leaf nodes in \mpt{}, whose \dosformula{indexing} is required to be collided by \dosattack{}.
A trivial idea is to collide each leaf node's \dosformula{indexing} one by one.
We denote the counts of hash calculations to collide a specific \dosformula{indexing} as $\theta$.
Hence, the expected number of keccak256 hash calculations required for hash collision (denoted as \dosformula{E$_{\phi}$}) grows linearly with the number of leaf nodes whose \dosformula{indexing} is required to be collided (denoted as $\phi$), i.e., by the trivial method, \dosformula{E$_{\phi}$ = } $\theta$ $\times$ $\phi$.
Compared with the trivial idea, we propose a new multi-target hash collision strategy to collide all target leaf nodes' \dosformula{indexing} simultaneously, and by which, \dosformula{E$_{\phi}$} decreases to be as $\theta$ $\times$ \dosformula{ln(}$\phi$\dosformula{)}.

\begin{lemma}
\label{lem_multitarget}
Given $\phi$ leaf nodes whose \dosformula{indexing} is required to be collided, if \dosattack{} costs $\theta$ keccak256 hash calculations to collide a specific \dosformula{indexing}, \dosattack{} costs $\theta$ $\times$ \dosformula{ln(}$\phi$\dosformula{)} keccak256 hash calculation to collide all target \dosformula{indexing} simultaneously by multi-target hash collision.
\hfill \qedsymbol
\end{lemma}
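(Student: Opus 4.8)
The plan is to recognize the multi-target hash collision procedure as an instance of the classical coupon collector problem, compute the expected number of keccak256 evaluations exactly, and then simplify with the harmonic-number asymptotics. Throughout I would model keccak256 as behaving like a random oracle, so that each hash evaluation produces an output that is (approximately) uniform and independent of all previous evaluations. Under this model, the hypothesis that colliding a single target's \dosformula{indexing} costs $\theta$ hash calculations means precisely that each evaluation matches that target's desired prefix with probability $1/\theta$; and since the $\phi$ target prefixes are pairwise distinct, at most one of them can be matched by any single evaluation, so the probability that one evaluation matches \emph{some} still-uncollided target depends only on how many targets remain.

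First I would set up the reduction. Identify each of the $\phi$ targets with a ``coupon'': a coupon is \emph{collected} once \dosattack{} has found an input whose keccak256 hash carries that target's desired prefix. Suppose $k$ of the $\phi$ coupons have been collected. By the random-oracle assumption and the disjointness of the prefixes, a fresh hash evaluation collects a new coupon with probability $(\phi-k)/\theta$, independently of the past. Hence the number of additional evaluations needed to pass from $k$ collected coupons to $k+1$ is geometric with success probability $(\phi-k)/\theta$, and its expectation is $\theta/(\phi-k)$.

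Next I would sum these waiting times. By linearity of expectation, the total expected number of keccak256 evaluations to collect all $\phi$ coupons is
\begin{equation}
E_\phi \;=\; \sum_{k=0}^{\phi-1}\frac{\theta}{\phi-k}\;=\;\theta\sum_{j=1}^{\phi}\frac{1}{j}\;=\;\theta\,H_\phi ,
\end{equation}
where $H_\phi$ is the $\phi$-th harmonic number. Finally, invoking the standard estimate $H_\phi = \ln\phi + \gamma + o(1)$ yields $E_\phi \approx \theta\ln\phi$, which is the claimed bound and a substantial improvement over the trivial $\theta\phi$ obtained by colliding the targets one by one.

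The main obstacle is not the combinatorics but the modeling step: making precise that keccak256 evaluations act like independent uniform draws, so that the per-evaluation success probability is exactly $1/\theta$ and the events ``this evaluation matches target $i$'' are mutually exclusive across $i$. I would justify this by the standard hardness/random-oracle heuristic for keccak256 used earlier (\S\ref{sec_intro}), and by assuming, as the lemma implicitly does, that all $\phi$ targets share the same collision difficulty $\theta$ (equivalently, the same desired-prefix length). Minor deviations -- unequal prefix lengths among the targets, or the negligible-probability event that two distinct inputs hash to the same target -- affect only lower-order terms and leave the $\theta\ln\phi$ estimate intact.
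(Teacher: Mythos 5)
Your proposal is correct and follows essentially the same route as the paper: the paper likewise combines the multi-target collision observation (per-hit cost $\theta/\phi$ scaling with the number of uncollided targets) with the coupon collector's problem to obtain $E_\phi = \frac{\theta}{\phi}\times\phi\times\ln(\phi) = \theta\,\ln(\phi)$. The only difference is that you derive the coupon-collector bound explicitly, summing geometric waiting times to get $\theta H_\phi \approx \theta\ln\phi$, where the paper simply cites the $\mathcal{O}(n\ln n)$ growth and multiplies the two cited quantities — your version is a more self-contained justification of the same estimate.
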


\begin{proof}[Proof of Lemma~\ref{lem_multitarget}]
According to the multi-target collision search~\cite{quisquater1989easy, oechslin2003making} which investigates how many calculation counts are required to achieve one collision against multiple targets, \dosattack{} costs $\frac{\theta}{\phi}$ calculations to collide a \dosformula{indexing} of all target leaf nodes. 
In addition,
according to the coupon collector's problem~\cite{xu2011generalized}, \dosformula{E$_{\phi}$} grows with $\phi$ as the complexity of $\mathcal{O}(n \times ln(n))$~\cite{xu2011generalized}.
Hence, under our multi-target hash collision strategy, the required calculation counts for achieving hash collision against all target leaf nodes (\dosformula{E$_{\phi}$}) equals to the product of the calculation counts for colliding a \dosformula{indexing} of all target leaf nodes ($\frac{\theta}{\phi}$) and the complexity of how \dosformula{E$_{\phi}$} grows ($\phi \times \dosformula{ln(}\phi\dosformula{)}$).
Hence, \dosformula{E$_{\phi}$} can be derived by Eq.~\ref{eq_multitarget}.
\end{proof}

\begin{small}
\useshortskip
\vspace{3pt}
\begin{gather}
\label{eq_multitarget}
\dosformula{E$_{\phi}$} \dosformula{ = } \frac{\theta}{\phi} \times \phi \times \dosformula{ln(}\phi\dosformula{)}
\end{gather}
\useshortskip
\end{small}

\noindent
\textbf{Leaf node insertion}.
After obtaining the leaf nodes to be inserted in \mpt{}, \dosattack{} uses different methods to insert leaf nodes in State Trie \colorboxred{3} and Storage Tries \colorboxred{4}, respectively, e.g., transferring 1 wei Ether (i.e., \dosformula{$10^{-18}$} Ether) to a target EOA account.
We elaborate on the methods in the following.

\noindent
$\bullet$
Insert leaf nodes in State Trie \colorboxred{3}. The \dosformula{indexing} of a leaf node in State Trie \colorboxred{3} is derived from the address of an EOA. Hence, after the hash colliding, \dosattack{} will finally determine an EOA in such cases.
To insert the corresponding leaf node in \mpt{}, \dosattack{} directly sends 1 wei Ether (i.e., \dosformula{$10^{-18}$} Ether) to the EOA account by transactions.

\noindent
$\bullet$
Insert leaf nodes in Storage Tries \colorboxred{4}. Storage Tries \colorboxred{4} hold the persistent data for a contract's storage, and the \dosformula{indexing} of a leaf node in Storage Tries \colorboxred{4} is derived from the slot of the contract storage.
To insert a leaf node in Storage Tries \colorboxred{4}, \dosattack{} can only modify the data reserved in corresponding slot by interacting with the contract~\cite{wood2014ethereum}.
Please note that, for a key-value pair in a mapping~\cite{solidity}, e.g., \doscode{k} and \doscode{v}, \doscode{v} stores in a storage slot, and the slot is derived from \doscode{k} by keccak256 hash computation.
Hence, to insert a leaf node corresponding to a specific slot, \dosattack{} crafts elements (e.g. \doscode{k}) in mappings of contracts.
Specifically, after hash colliding, \dosattack{} determines the parameters~\cite{chen2021sigrec,zhao2023deep} to invoke a function in the target contract, which will update data in the target slot (corresponding to the target leaf node in Storage Tries \colorboxred{4}).
Taking ERC20 token contract~\cite{fabian2015eip20,he2023tokenaware,chen2019tokenscope} as an example, \dosattack{} finally determines the parameters for invoking \doscode{transfer()} by transferring the smallest unit of token (e.g., \dosformula{$10^{-6}$} USDT) to a target account address, where \doscode{transfer()} inserts the crafted leaf node in Storage Tries \colorboxred{4}, and set the data of the desired slot to be 1.

\noindent
\textbf{Optimization strategy}.
To launch \dosattack{}, its cost should be considered (\S\ref{sec_model}).
Expanding intermediate nodes associated with all accounts is impractical, because there are billions of leaf nodes in \mpt{}, and the corresponding cost is beyond the limited computing resources and assets of the adversaries in our model (\S\ref{sec_model}).
To reduce the cost, we propose an optimized strategy of \dosattack{} to only deepen the leaf nodes associated with active accounts.
Active accounts are the accounts that keep conducting frequent trades over a period of time~\cite{ethanos},
and they can be trivially captured by querying the frequency of each account being accessed and modified from blockchain.
According to the captured list, 
adversaries can strategically delineate the range of leaf nodes deepened by \dosattack{}, e.g., the leaf nodes associated with the accounts modified no less than six times in a specific range of time.
Since nodes associated with active accounts are keeping updating the reserved data, the resource consumption brought by \dosattack{} will be further exacerbated with limited cost.

\subsection{The implementation of \dosattack{}}
\label{sec_implement}

\begin{figure}[!t]
\small
	\centering
	\includegraphics[width=0.48\textwidth]{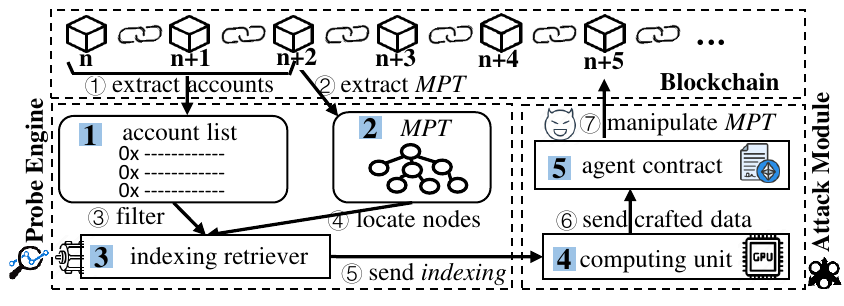}
	\caption{The implementation consists of seven steps. 
 Step \circlednum{1} and \circlednum{2}: \dosattack{} extracts the list of all accounts \protect\colorboxblue{1} and \mpt{} \protect\colorboxblue{2} from blockchain. 
 Step \circlednum{3}: The indexing retriever \protect\colorboxblue{3} retrieves target accounts from the list of all accounts \protect\colorboxblue{1}. 
 Step \circlednum{4}: The indexing retriever 
 \protect\colorboxblue{3} locates target leaf nodes in \mpt{} \protect\colorboxblue{2}, and derives \dosformula{indexing} of them.
 Step \circlednum{5}: The indexing retriever \protect\colorboxblue{3} forwards derived \dosformula{indexing} to the computing unit \protect\colorboxblue{4} for colliding \dosformula{indexing}. 
 Step \circlednum{6}: The computing unit \protect\colorboxblue{4} generates the crafted data (e.g., addresses of accounts) from \dosformula{indexing} with the desired prefix, and sends the crafted data to agent contract \protect\colorboxblue{5}. 
 Step \circlednum{7}: The agent contract \protect\colorboxblue{5} 
  manipulates \mpt{} by inserting crafted leaf nodes.} 
	\label{fig_implement}
\end{figure}

Fig.~\ref{fig_implement} shows the overview of \dosattack{}'s implementation.
There are seven steps in \dosattack{}
encompassing three core portions: i) Blockchain serves as the data source and attack target. 
ii) Probe engine analyzes the accounts information and \mpt{}, and retrieves the \dosformula{indexing} of target leaf nodes.
iii) Attack module wields computing resources to generate the crafted data from the retrieved \dosformula{indexing}, and invokes the agent contract to insert the crafted leaf nodes into \mpt{}. We portray the implementation of \dosattack{} by steps \circlednum{1} to \circlednum{7}.

\noindent\textbf{Blockchain.} In Step \circlednum{1} and \circlednum{2}, \dosattack{} collects the list of all accounts~\colorboxblue{1} and \mpt{}~\colorboxblue{2} from blockchain. The account list \colorboxblue{1} is used to capture active accounts. 
Besides, we record the frequency of each account being accessed and modified in the account list \colorboxblue{1}. \mpt{} \colorboxblue{2} is used to retrieve the \dosformula{indexing} of active accounts' leaf nodes for proliferating intermediate nodes. We instrument logic of \circlednum{1} and \circlednum{2} in blockchain client, and run the client in real-time for data collection.

\noindent\textbf{Probe engine.} Probe engine retrieves \dosformula{indexing} of target leaf nodes by analyzing the list of accounts~\colorboxblue{1} and \mpt{}~\colorboxblue{2},
and \dosattack{} aims to trigger node splitting and proliferate intermediate nodes associated with the leaf nodes.
In Step \circlednum{3}, Probe engine first retrieves active accounts from the account list~\colorboxblue{1}. We determine an active account when the frequency of it being modified and accessed is larger than a threshold (e.g., 6)
in a specific block range~\cite{ethanos}. In Step \circlednum{4}, the indexing retriever~\colorboxblue{3} retrieves the leaf nodes and their \dosformula{indexing} corresponding to active addresses from \mpt{} (\S \ref{sec_design}).
In step \circlednum{5}, the indexing retriever~\colorboxblue{3} sends obtained \dosformula{indexing} to the Attack module for the computing of hash collision.

\begin{figure}[!b]
\scriptsize
	\centering
    \begin{lstlisting}[mathescape=true,language=Solidity, frame=none, basicstyle=\scriptsize \ttfamily]
contract NurglePrototype{
  function NurgleState(address payable[] memory Payloads) payable public{
    uint256 len = Payloads.length;
    for(uint256 i=0; i < len; i++){
      bool result=Payloads[i].send(1);}}
  function NurgleStorage(address dst, bytes4 funcsig, bytes[] memory Payloads, uint num) public { 
    uint256 len = Payloads.length/num;
    for(uint256 i=0; i < len; i++){
      bytes memory encode=abi.encodePacked(funcsig);
      for(uint256 j=num*i;j <num*i+num;j++){
        encode=abi.encodePacked(encode, Payloads[j]);}
      dst.call(encode);}}} 
    \end{lstlisting}
	\caption{Code snippet of agent contract. \doscode{NurgleState()} (Line 2-5) inserts the crafted leaf nodes into State Trie \protect\colorboxred{3} by sending 1 wei Ether to target accounts. \doscode{NurgleStorage()} (Line 6-12) inserts crafted leaf nodes on Storage Tries \protect\colorboxred{5}. As the logic of how smart contracts access their Storage Tries \protect\colorboxred{5} can be distinct, \doscode{NurgleStorage()} allows the adversary to customize: i) \doscode{dst}, the callee contact, ii) \doscode{funcsig}, the function to be invoked, and iii) \doscode{Payloads}, the parameters for invoking the function. In Line 9-11, \doscode{NurgleStorage()} splices \doscode{funcsig} and \doscode{Payloads}. In Line 15, \doscode{NurgleStorage()} invokes \doscode{dst} with crafted payload, which executes the logic of \doscode{dst} to insert the leaf nodes on Storage Tries \protect\colorboxred{5}.}
	\label{fig_agent}
\end{figure}

\noindent\textbf{Attack module.} The Attack module generates the corresponding crafted data through the computing unit~\colorboxblue{4}, and then inserts the crafted leaf nodes into \mpt{} through the agent contract~\colorboxblue{5}. Computing unit~\colorboxblue{4} utilizes GPU resources for hash computing. The agent contract~\colorboxblue{5} is deployed by the adversary, and Fig~\ref{fig_agent} displays the code snippet of the agent contract. 
In step \circlednum{6}, \dosattack{} leverages the computing unit~\colorboxblue{4} to collide \dosformula{indexing} with desired prefix to generate the crafted data (e.g., the address of an account). Computing unit~\colorboxblue{4} then sends the crafted data to the agent contract~\colorboxblue{5}.
In Step \circlednum{7}: The agent contracts~\colorboxblue{5} insert crafted leaf nodes into \mpt{} by invoking its functions (i.e., \doscode{NurgleState()} and \doscode{NurgleStorage()}). \doscode{NurgleState()} (Line 2-5) inserts crafted leaf nodes in State Trie \colorboxred{3} by sending 1 wei Ether to target accounts. \doscode{NurgleStorage()} (Line 6-12) inserts crafted leaf nodes in Storage Tries \colorboxred{5} by invoking specific functions of target contracts (e.g., \doscode{transfer()} function of ERC20 token contract) with crafted payload.

\section{Evaluation}
\label{sec_evaluaion}

We answer four research questions for evaluating \dosattack{}'s cost and impact. 
\textbf{RQ1:} How do computing resources influence the attack impact of \dosattack{}?
\textbf{RQ2:} How severe is the attack impact of \dosattack{} on the current blockchain? 
\textbf{RQ3:} How is the economic feasibility of \dosattack{}? 
\textbf{RQ4}: Will \dosattack{} threaten the current blockchain in practice?

\noindent\textbf{Experimental setup.} 
We evaluate \dosattack{} on a server with an Intel Xeon Gold 5218R CPU (2.10 GHz, 10 cores), 64 GB RAM, 1 TB SSD, and single RTX3080 GPU. 
We adopt a go-ethereum client at v1.11.6~\cite{gethclient} to measure the consumed resources of blockchain. 
We evaluate the impact of \dosattack{} on blockchain by the time cost of state modification, 
because it can comprehensively reflect the consumed resources, e.g., CPU computation, and the load and read for memory and disk, during state modification~\cite{ponnapalli2021rainblock}.
Please note that we do not explicitly distinguish modifying and maintaining state, since they are interwoven in \textbf{OP1-4}.

\begin{table*}[]
\centering
\caption{Time cost for $\dosattack{}$ to collide different lengths of desired prefix for an \dosformula{indexing}.}
\label{tab_preiamges}
\resizebox{0.99\linewidth}{!}{
\begin{tabular}{@{}l|l|l|c@{}}
\toprule
\begin{tabular}[c]{@{}l@{}}Digits\end{tabular} & Crafted data (hex encoding) & $\dosformula{indexing}$ (hex encoding) & Time cost  \\
\midrule \midrule
1 & 0x51b0e4b84afc9c7e935fd1c54409abda46ffff07 & 0x$\underline{1}$09999afd60b733da226a060260c2d9f165f0f33516c5a3230d2b9538ae197e7 & $\dosformula{$<$1s}$  \\
2 & 0x7c0caee5b72d0c71a090c6f02522e89acfffff07 & 0x$\underline{11}$fb9e6a64c5a7c23fb27d08e3d74ea1018fcb0c60d2010cca6c6654dd95e4b8 & $\dosformula{$<$1s}$  \\
3 & 0x8f5ea3c9db43de4143e5717f44dcb43e05d0fe07 & 0x$\underline{111}$0dc62b86ce4609e860381909da5480d46b2e90ea19c5afac287be805c234b & $\dosformula{$<$1s}$  \\
4 & 0xbd6f8cba28b4a0218d0aedbc820a27248ee4fe07 & 0x$\underline{1111}$65e10752633a1ab85c219c618d6c6e6259fdb7c8d2397df9cb72d16e4149 & $\dosformula{$<$1s}$  \\
5 & 0xfccedcfd14858e8b1baf9a497e99af468012b507 & 0x$\underline{11111}$0e0c5d11a713c428c42a03a5a7c55d66c0e61158ef13a63776b94d384d0 & $\dosformula{$<$1s}$  \\
6 & 0x58b91f9cb0ffacae5d95c9e80c373d264993cc06 & 0x$\underline{111111}$078c719cdc5abc2195b645a72ba7dd4d15b12ab9cce3361466c402df69 & $\dosformula{$<$1s}$ \\
7 & 0x89f25e63c12c48a95c22cd4b19585f337a805f06 & 0x$\underline{1111111}$06b6090ca5f7027a7539dc73173e26a35b28645b47d4878db6bbddd62 & $\dosformula{$<$1s}$ \\
8 & 0xa0f0722109f07edd76cc1d2b29cfbc0122ca2b06 & 0x$\underline{11111111}$ce35790ede4c97cc847e55c91c0b3063f5cb56ab6ab93ee76381fa6a & $\dosformula{$<$1s}$ \\
9 & 0x97637e992f835689667a48a0731ce1ebb44dc006 & 0x$\underline{111111111}$0b3bf4ed6dc409fb20328970a0f23dac93761a4347fcd4c84dfe8cc & $\dosformula{53s}$  \\
10 & 0x2f1033b78f8fb3c04259202793d2d89169326d02 & 0x$\underline{1111111111}$ce8bad4529bfef324c88454fe4e72c3cd3974c0249c9adc764802a & $\dosformula{21.68m}$  \\
11 & 0x267a239f1986295e996358a79f57b473ae264d05 & 0x$\underline{11111111111}$00822f67e0319be36eb814ade0ca60c65c62b41641e889eb48ad8 & $\dosformula{2.8h}$\\
12 & 0xd4dfd776a81fcdfa2d601f1efa31a2ad8c21fe06 & 0x$\underline{111111111111}$834eea3006374356f398b29f9b709272533e759348f0bb07aa11 & $\dosformula{12.57h}$ \\
13 & 0xdf04b72b67666a59ff30c06dd079f1850b36ba04 & 0x$\underline{1111111111111}$ca536d3de683a3ab986f631ee733132457eccc0d9a011aa9e55 & $\dosformula{24.58h}$ \\ \bottomrule
\end{tabular}
}
\vspace{3pt}
\end{table*}

\subsection{How do computing resources affect \dosattack{}?}
\label{sec_evaluate_cap}

\dosattack{} crafts leaf nodes that contain a common prefix with a target leaf node, and inserts the crafted leaf nodes to deepen the target leaf node and proliferate intermediate nodes, causing extra consumed resources in modifying and maintaining \mpt{} (\textbf{OP1-4}) (\S\ref{sec_design}).
Besides, according to Lemma~\ref{lem_attack}, for a leaf node, if \dosattack{} can craft another leaf node that contains a common prefix with the target node at the length of \dosformula{x}, and then \dosattack{} can deepen the target leaf node to the layer \dosformula{x+2}.
Hence, the larger \dosformula{x} that \dosattack{} can find out, the deeper the target node can be deepened (i.e., \dosformula{x+2}).
Therefore, to assess capability of \dosattack{} under the different cost of computing resources, we evaluate the required computing resources of the adversary for colliding the different lengths of the common prefix.

\ignore{
\begin{small}
\begin{gather}
\label{eq_power}
\dosformula{T}=\frac{\dosformula{16}^\dossmallformula{n}}{\dosformula{P}\times\dosformula{3600}}
\end{gather}
\end{small}

Eq.~\ref{eq_power} relies on a practical assumption that the adversary adopts a brute force strategy to collide the target prefix (cf. details in Appendix~\ref{sec_collsion})~\cite{kelsey2006herding}.
In Eq.~\ref{eq_power},
we denote
\dosformula{T} as the required time to collide a desired prefix of the target \dosformula{indexing}, \dosformula{n} as the length of the desired prefix to be collided, \dosformula{16}$^\dossmallformula{n}$ as the search space of the hash collision, and \dosformula{P} as the number of hash collisions that an adversary can perform per second.
}

To uniform the comparison of consumed computing resources, we fix hardwares used to conduct \dosattack{} (e.g., single RTX3080 GPU), and estimate the required computing resources by utilizing the cost time for \dosattack{} to collide a desired prefix at different lengths.
In our evaluation, we launch \dosattack{} to collide an \dosformula{indexing}, i.e., \doscode{0x1111..1111}, for demonstration.
Besides, our evaluation relies on a practical assumption that the adversary adopts a brute force strategy to collide the target prefix (cf. details in Appendix~\ref{sec_collsion})~\cite{kelsey2006herding}.
Under the fixed computing resources (e.g., single RTX3080 GPU), \dosattack{} can conduct 1.90 billion hash calculations per second, 
and we record the time cost until \dosattack{} successfully crafts the target prefix.
Table~\ref{tab_preiamges} shows our experimental results.
In Table~\ref{tab_preiamges}, the first column lists the length of the desired prefix in hash collision, the second column lists the final data crafted by \dosattack{}, the third column lists the corresponding \dosformula{indexing} derived by the crafted data in the second column, and the fourth column lists the time cost for \dosattack{} to collide the desired prefix.
As a result, under a reasonable consumption of computing resources, the adversary can collide out a desired prefix at the length of 13, which will deepen the target node to the layer 15 (Lemma~\ref{lem_attack} in \S\ref{sec_design}).
Please note that, the adversary can shorten the time cost of \dosattack{} by just deploying more GPUs or switching more powerful GPUs.

\noindent\textbf{Answer to RQ1}:
\textit{
Computing resources affect the length of the collided prefix for a target \dosformula{indexing} crafted by \dosattack{}.
Under a reasonable consumption of computing resources, \dosattack{} can craft a desired prefix at the length of 13, which can deepen the target node to the layer 15 in \mpt{}.
}

\subsection{How does \dosattack{} threaten blockchains?}
\label{sec_evaluate_mainnet}

\noindent
\textbf{Estimating attack impact}.
Time cost of blockchain for modifying and maintaining its \mpt{} (i.e., \textbf{OP1-4} in Table~\ref{tab_stateops}) linearly increases with the number of nodes in \mpt{} involved in modifying and maintaining state.
Hence, we can estimate the overhead raised by \dosattack{} in state modification and maintenance according to the number of nodes in \mpt{} proliferated by \dosattack{}.
Specifically,
we assume that \dosattack{} deepens several leaf nodes, and
Eq.~\ref{eq_grossnode} derives the overhead (i.e., \dosformula{F}$_{\dossmallformula{nurgle}}$) in updating and maintaining deepened leaf nodes in \mpt{} brought by \dosattack{}.
\dosformula{F}$_{\dossmallformula{nurgle}}$ is obtained by dividing i) the number of nodes involved in handling (i.e., updating and maintaining) deepened leaf nodes in \mpt{} under the attack by ii) the number of nodes involved in handling deepened leaf nodes in \mpt{} without the attack. 
In Eq~\ref{eq_grossnode}, \dosformula{Num}$_{\dossmallformula{StateTrie}}$ and \dosformula{Num}$_{\dossmallformula{StorageTries}}$
are denoted as the number of nodes involved in handling deepened leaf nodes in State Trie \colorboxred{3} and Storage Tries \colorboxred{4} without the attack, respectively.
Besides, \dosformula{Num}$_{\dossmallformula{StateTrie}}^{\dossmallformula{'}}$ and \dosformula{Num}$_{\dossmallformula{StorageTries}}^{\dossmallformula{'}}$ are denoted as the number of corresponding nodes in State Trie \colorboxred{3} and Storage Tries \colorboxred{4} under the attack, respectively.
Please note that, 
nodes in \mpt{} can be partitioned into two parts, i.e., nodes in State Trie \colorboxred{3} and nodes in Storage Tries \colorboxred{4} (\S\ref{sec_background_state}).
Therefore, 
the number of nodes in \mpt{} involved in handling deepened leaf nodes in \mpt{} without and under the attack of \dosattack{} are \dosformula{Num}$_{\dossmallformula{StateTrie}}$\dosformula{+}\dosformula{Num}$_{\dossmallformula{StorageTries}}$ and 
\dosformula{Num}$_{\dossmallformula{StateTrie}}^{\dossmallformula{'}}$\dosformula{+}\dosformula{Num}$_{\dossmallformula{StorageTries}}^{\dossmallformula{'}}$, respectively.

\begin{small}
\useshortskip
\vspace{3pt}
\begin{gather}
\label{eq_grossnode}
\dosformula{F}_{\dossmallformula{nurgle}}\dosformula{ = }\frac{\dosformula{Num}_{\dossmallformula{StateTrie}}^{\dossmallformula{'}}\dosformula{+}\dosformula{Num}_{\dossmallformula{StorageTries}}^{\dossmallformula{'}}}{\dosformula{Num}_{\dossmallformula{StateTrie}}\dosformula{+}\dosformula{Num}_{\dossmallformula{StorageTries}}} \\
\label{eq_statenode}
\dosformula{Num}_{\dossmallformula{StateTrie}}^{\dossmallformula{'}}\dosformula{ = } 
\dosformula{Num}_{\dossmallformula{StateTrie}}\dosformula{+}\dosformula{Num}_{\dossmallformula{Account}}\times\dosformula{(}\dosformula{d}_{\dossmallformula{nurgle}} \dosformula{-}\dosformula{d}_{\dossmallformula{base}}\dosformula{)}  \\
\label{eq_contractnode}
\dosformula{Num}_{\dossmallformula{StorageTries}}^{\dossmallformula{'}} \dosformula{=}
\dosformula{Num}_{\dossmallformula{StorageTries}}\dosformula{+}\dosformula{Num}_{\dossmallformula{Slot}}\times\dosformula{(}\dosformula{d}_{\dossmallformula{nurgle}}\dosformula{-}\dosformula{d}_{\dossmallformula{base}}\dosformula{)}
\end{gather}
\end{small}

\begin{figure}[!t]
\small
	\centering
	\includegraphics[width=0.4\textwidth]{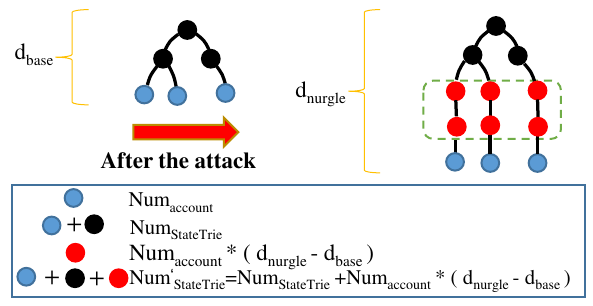}
	\caption{ The nodes in \mpt{} without and under the attack. 
 }
	\label{MPT_DOS_formula}
\end{figure}

The number of nodes involved in handling deepened leaf nodes in \mpt{} can be trivially obtained by traversing \mpt{}.
In addition,
we also present models to estimate \dosformula{Num}$_{\dossmallformula{StateTrie}}^{\dossmallformula{'}}$ (Eq.~\ref{eq_statenode}) and \dosformula{Num}$_{\dossmallformula{StorageTries}}^{\dossmallformula{'}}$ (Eq.~\ref{eq_contractnode}),
which can be used by adversaries to estimate the attack impact by Eq.~\ref{eq_grossnode} before launching \dosattack{}.
Specifically, the number of nodes involved in handling deepened leaf nodes under the attack is the sum of i) the number of corresponding nodes without the attack, and ii) the number of proliferated intermediate nodes involved in handling deepened leaf nodes.
Besides, the above proliferated intermediate nodes can be derived by the product of i) the number of deepened leaf nodes (i.e., \dosformula{Num}$_{\dossmallformula{Account}}$ in State Trie \colorboxred{3} and \dosformula{Num}$_{\dossmallformula{Slot}}$ in Storage Tries \colorboxred{3}), and ii) the number of layers that the leaf nodes are deepened (i.e., \dosformula{d}$_{\dossmallformula{nurgle}}$\dosformula{-}\dosformula{d}$_{\dossmallformula{base}}$).
In Eq.~\ref{eq_statenode} and Eq.~\ref{eq_contractnode}, \dosformula{d}$_{\dossmallformula{nurgle}}$ 
and \dosformula{d}$_{\dossmallformula{base}}$ refer to the layers where corresponding leaf nodes are located after and without the attack, respectively.
For Fig.~\ref{MPT_DOS_formula} as an example, three leaf nodes are involved in state modification in the State Trie \colorboxred{3}, and \dosattack{} deepens them by two layers.
Without the attack, \dosformula{Num}$_{\dossmallformula{StateTrie}}$ equals to 6.
Besides, \dosformula{Num}$_{\dossmallformula{Account}}$ equals to 3, and \dosformula{d}$_{\dossmallformula{nurgle}}$\dosformula{-}\dosformula{d}$_{\dossmallformula{base}}$ equals to 2.
Hence, according to Eq.~\ref{eq_statenode}, \dosformula{Num}$_{\dossmallformula{StateTrie}}^{\dossmallformula{'}}$ equals to 6 $+$ 3 $\times$ 2, i.e., 12.
It means that \dosattack{} causes 200\% overhead (12/6) for maintaining and updating the three leaf nodes.

\ignore{
\begin{table}[!b]\centering
\caption{Six parameters captured from Ethereum}
\label{tab_parameters}
\resizebox{0.99\linewidth}{!}{
\begin{tabular}{@{}c|c|c|c|c|c@{}}
\toprule
 \dossmallformula{Num}$_{\dostinyformula{Account}}$  & \dossmallformula{Num}$_{\dostinyformula{StateTrie}}$ & \dossmallformula{Num}$_{\dostinyformula{Slot}}$  & \dossmallformula{Num}$_{\dostinyformula{StorageTries}}$ & \dossmallformula{d}$_{\dostinyformula{base}}$ & \dossmallformula{d}$_{\dostinyformula{nurgle}}$ \\

 \midrule \midrule
 712,565 & 7,116,002 & 2,041,719 & 6,506,806  & 9.5  & 15     \\
\bottomrule
\end{tabular}
}
\end{table}
}

\noindent
\textbf{Evaluating attack impact}.
We evaluate attack impact in the fork of Ethereum, because it is the most popular blockchain under threats of \dosattack{}.
Specifically,
we synchronize an instrumented client~\cite{gethclient} to capture the transactions of Ethereum from the block height of \#14.99M to \#15M (i.e., 10,000 blocks).
By replaying captured transactions in corresponding blockchain state~\cite{zhou2021a2mm,li2023blockexplorer,chen2019dataether}, we retrieve the nodes in \mpt{} involved in state modification in the transactions.
Specifically, we retrieve 7,116,002 nodes from State Trie \colorboxred{3} and 6,506,806 nodes from Storage Tries \colorboxred{4}, which contain 712,565 and 2,041,719 leaf nodes, respectively.
We then launch \dosattack{} in our forked \mpt{} of Ethereum at the block height of \#15M to deepen the retrieved leaf nodes to the layer 15.
Finally, we traverse \mpt{} to count how many intermediate nodes are proliferated by \dosattack{}, which will be involved in the state modification of previously retrieved leaf nodes.
As a result, we find that the number of proliferated intermediate nodes is 1.11 times of all the retrieved nodes.
It means that, under the attack, the time of state
modification from \#14.99M to \#15M consumes more 111\% resources, because it requires handling more 111\% nodes proliferated by \dosattack{}.
Besides, for all future transactions involving the nodes added by \dosattack{}, the time cost of their state modification is persistently raised.
Please note that the majority of blockchain execution time is consumed by interactions with \mpt{} (\S\ref{sec_observation_stateop}).
Hence, the increase in the number of nodes involved in \mpt{} can raise a considerable overhead on the execution performance of the blockchain, leading to performance degradation in the blockchain.
\ignore{As mentioned in~\S\ref{sec_observation_stateop}, the operations interacting with \mpt{} cost over 81\% execution time of blockchain. Hence, 
we can roughly estimate that \dosattack{} raises more than 89.91\% overhead of its execution performance, indicating that \dosattack{} can significantly degrade the execution performance of blockchain.}

\ignore{
\begin{table}[!b]\centering
\caption{Six parameters captured from Ethereum}
\label{tab_parameters}
\resizebox{0.99\linewidth}{!}{
\begin{tabular}{@{}c|c|c|c|c|c@{}}
\toprule
 \dossmallformula{Num}$_{\dostinyformula{Account}}$  & \dossmallformula{Num}$_{\dostinyformula{StateTrie}}$ & \dossmallformula{Num}$_{\dostinyformula{Slot}}$  & \dossmallformula{Num}$_{\dostinyformula{StorageTries}}$ & \dossmallformula{d}$_{\dostinyformula{base}}$ & \dossmallformula{d}$_{\dostinyformula{nurgle}}$ \\

 \midrule \midrule
 712,565 & 7,116,002 & 2,041,719 & 6,506,806  & 9.5  & 15     \\
\bottomrule
\end{tabular}
}
\end{table}
}

We further validate whether we can successfully estimate the attack impact.
Specifically, the values of the six parameters (i.e., \dosformula{Num}$_{\dossmallformula{Account}}$, \dosformula{Num}$_{\dossmallformula{StateTrie}}$,
\dosformula{Num}$_{\dossmallformula{Slot}}$,
\dosformula{Num}$_{\dossmallformula{StorageTries}}$
\dosformula{d}$_{\dossmallformula{base}}$, and
\dosformula{d}$_{\dossmallformula{nurgle}}$) from our captured transactions are 712,565, 7,116,002, 2,041,719, 6,506,806, 9.5, and 15, respectively.
To simplify the estimation of attack impact, we have averaged out the \dosformula{d}$_{\dossmallformula{base}}$.
Based on the six parameters, we derive that the corresponding \dosformula{F}$_{\dossmallformula{Nurgle}}$ equals to 2.112.
Our models estimate that there are more 111.2\% nodes proliferated by \dosattack{}, causing 111.2\% more consumed resources in handing \mpt{}.
Our results validate that our models estimate the impact of the attack almost perfectly (difference $<$ 1\%).

\noindent\textbf{Answer to RQ2}:
\textit{
\dosattack{} significantly degrades the execution performance of blockchain.
}

\begin{table*}[!t]
\centering
\caption{The cost of \dosattack{} on seven different blockchains}
\label{tab_cost}
\resizebox{0.99\linewidth}{!}{
\begin{threeparttable}
\begin{tabular}{@{}l||c|c|c|c|c||c|c|c@{}}
\toprule
Blockchain  & \dosformula{Price}$_\dossmallformula{coin}$ (USD) & \dosformula{Price}${_\dossmallformula{gas}}$ & \makecell{\dosformula{G}${_\dossmallformula{gas}}$ (USD)} &
\makecell{\dosformula{G}${_\dossmallformula{gpu}}$ (USD)} & \makecell{\dosformula{G}${_\dossmallformula{nurgle}}$ (USD)} &
\makecell{\begin{tabular}[c]{@{}c@{}}Optimized\\ \dosformula{G}${_\dossmallformula{gas}}$ (USD)\end{tabular}}&
\makecell{\begin{tabular}[c]{@{}c@{}}Optimized\\ \dosformula{G}${_\dossmallformula{gpu}}$ (USD)\end{tabular}} & 
\makecell{\begin{tabular}[c]{@{}c@{}}Optimized\\ \dosformula{G}${_\dossmallformula{nurgle}}$ (USD)\end{tabular}}
 \\
 \midrule \midrule
Ethereum &	1,812	& 22.5 G wei	&  11,808,917.46 	& 39.6 	& 11,808,957.06 & 413,312.11 & 33
  & 413,345.11\\
\begin{tabular}[c]{@{}l@{}}Binance Smart Chain\end{tabular}	& 252.71	& 2.71 G wei	& 198,360.95 &	39.6 &	198,400.55  & 6,942.63 & 33& 6,975.63\\
Heco	&2.81	&2.5 G wei	& 2,034.77 	& 39.6 	& 2,074.37 & 71.21 & 33 & 104.21\\
Polygon&	0.71	&206.30 G wei	& 42,596.22 
 & 39.6 &	42,635.82 & 1,490.86 & 33 & 1,523.86\\
Optimism	&1,812	& 9.35$\times$10$^{-8}$ G wei & 0.049   &	39.6  &	39.649 & 0.0017 & 33  & 33.0017 \\
Avalanche	&22.66	&27.76 n AVAX &182,200.15 	& 39.6 	& 182,239.75 & 6,377.00 & 33 & 6,410.00\\
\begin{tabular}[c]{@{}l@{}}Ethereum Classic\end{tabular}	&16.52	&1.17 G wei	&5,596.96 	& 39.6 &	5,636.56 & 195.89 & 33 & 228.89 \\
\bottomrule
\end{tabular}
\vspace{-1pt}
\begin{tablenotes}[flushleft]
{
\setlength{\itemindent}{-2.49997pt} \small
\item  \footnotesize{Due to the volatility of cryptocurrency and gas prices, we have calculated their average values over a one-week period, spanning from June 6, 2023, to June 12, 2023.}
}
\end{tablenotes}
\end{threeparttable}
}
\vspace{3pt}
\end{table*}

\subsection{How much does \dosattack{} cost?}
\label{sec_evaluate_eco}

The cost of leveraging \dosattack{} is an essential metric for a financially rational adversary.
According to~\S\ref{sec_model}, the cost of \dosattack{} (denoted as \dosformula{G}$_{\dossmallformula{nurgle}}$) consists of two parts, i.e., \dosformula{G}$_{\dossmallformula{gpu}}$, the cost of computing resources (mainly GPUs~\cite{liu2022ready}) for the calculation of hash collision (\S\ref{sec_design}), and \dosformula{G}$_{\dossmallformula{gas}}$, the cost of gas fee for submitting attack payloads to blockchain via transactions.
Besides, 
\dosformula{G}$_{\dossmallformula{gpu}}$ is derived by the product of i) \dosformula{Num}$_{\dossmallformula{gpu}}$, the number of GPUs utilized by adversaries, ii) \dosformula{Time}$_{\dossmallformula{hours}}$, the hours of renting GPUs from GPU markets~\cite{gpumarket} by adversaries, and iii) \dosformula{Price}$_{\dossmallformula{gpu}}$, the USD price for renting a GPU in GPU markets.
Please note that, when multiple GPUs are required in launching \dosattack{}, it is reasonable for adversaries to minimize the cost by renting GPUs from GPU markets (e.g.,~\cite{gpumarket}) for a short period of time~\cite{liu2022ready}.
Furthermore, 
according to the specifications of gas mechanism~\cite{wood2014ethereum},
\dosformula{G}$_{\dossmallformula{gas}}$ is derived by the product of 
i) \dosformula{Price}$_{\dossmallformula{gas}}$, the cryptocurrency (e.g., Ether) price of a unit of gas, 
ii) \dosformula{Units}$_{\dossmallformula{gas}}$, the units of consumed gas for executing the transactions containing attack payloads, and iii) \dosformula{Price}$_{\dossmallformula{coin}}$, the USD price of the cryptocurrency.
We present corresponding equations for assessing the USD cost of \dosattack{} in Eq.~\ref{eq_cost0} - Eq.~\ref{eq_cost1}.

In the rest of this section, we estimate the cost of \dosattack{} on seven mainstream blockchains (\S\ref{sec_rq3_seven}), and evaluate how the cost of \dosattack{} can be optimized (\S\ref{sec_rq3_active}).

\begin{small}
\useshortskip
\vspace{3pt}
\begin{gather}
\label{eq_cost0}
\dosformula{G}_{\dossmallformula{nurgle}}\dosformula{ = }\dosformula{G}_{\dossmallformula{gas}}+\dosformula{G}_{\dossmallformula{gpu}} \\
\label{eq_cost2}
\dosformula{G}_{\dossmallformula{gpu}}\dosformula{ = }\dosformula{Num}_{\dossmallformula{gpu}} \times \dosformula{Time}_{\dossmallformula{hours}} \times \dosformula{Price}_{\dossmallformula{gpu}}\\
\label{eq_cost1}
\dosformula{G}_{\dossmallformula{gas}}\dosformula{ = }\dosformula{Price}{\dossmallformula{gas}}\times \dosformula{Units}_{\dossmallformula{gas}} \times \dosformula{Price}_{\dossmallformula{coin}}
\end{gather}
\end{small}

\ignore{
Eq.~\ref{eq_cost0} to \ref{eq_cost2} formalize the cost per block of \dosattack{}. 
Eq.\ref{eq_cost0} formalizes the gross cost of \dosattack{}, and the gross cost  \dosformula{G}${_\dossmallformula{nurgle}}$ consists of two portions, the gas fee  \dosformula{G}${_\dossmallformula{gas}}$ and the Computing resources fee \dosformula{G}${_\dossmallformula{gpu}}$. 

First, we derive the gas fee \dosformula{G}${_\dossmallformula{gas}}$ in Eq.\ref{eq_cost1}. 
The gas fee is mainly consumed on transactions initiated by \dosformula{}, such as creating new addresses and inserting leaf nodes into MPT.
The \dosformula{G}${_\dossmallformula{gas}}$ is multiplied by three parameters, namely, the units of gas per block consumed by \dosattack{} (i.e., \dosformula{Units}${_\dossmallformula{gas}}$), the gas price (i.e., \dosformula{Price}${_\dossmallformula{Gas}}$), and the coin price of the target blockchain (i.e., \dosformula{Price}$_\dossmallformula{coin}$). 
Note that different blockchains have divergent gas prices and coin prices.

Then, we obtain the Computing resources fee \dosformula{G}${_\dossmallformula{gpu}}$ by Eq~\ref{eq_cost2}. 
We consider that the adversary adopts leased mode~\cite{gpu_lease} to obtain the computing unit(i.e., GPUs), as the expense on leased mode can include energy expense and make the evaluation more accurate and practical. 
Concretely, the \dosformula{G}${_\dossmallformula{gpu}}$ is first multiplied by three parameters, namely, the number of GPU (i.e., \dosformula{Num}$_{\dossmallformula{gpu}}$), rental time (i.e., \dosformula{Time}$_{\dossmallformula{hours}}$), and rental price (i.e., \dosformula{Price}$_{\dossmallformula{gpu}}$) , and then divided by the range of blocks (\dosformula{Block}${_\dossmallformula{range}}$).  
}

\subsubsection{Cost for attacking seven blockchains}
\label{sec_rq3_seven}
We demonstrate \dosattack{}'s attack towards
seven popular blockchains (i.e., Ethereum, BSC, Heco, Polygon, Optimism, Avalance, and Ethereum Classic)~\cite{coinmarketcap}, and measure corresponding attack cost of \dosattack{}. 
Since node numbers and layers in \mpt{} of the seven blockchains are distinct, according to \S\ref{sec_evaluate_cap}, the 
attack impact of \dosattack{} is different on the seven blockchains.
To uniformly and fairly compare the cost of \dosattack{} on different blockchains, we fix the attack impact of \dosattack{} on different blockchains.
Specifically,
we reuse the attack impact for the attack launched by us in \S\ref{sec_evaluate_mainnet} on each blockchain to measure the cost of launching \dosattack{}.

We further derive actual values for parameters in Eq.~\ref{eq_cost0} - Eq.~\ref{eq_cost1} on seven blockchains in the following.
Since the attack impact is fixed, the corresponding attack procedure of \dosattack{} should also be fixed, e.g., the procedure of hash collision (\S\ref{sec_design}).
Hence, for the same attack impact on different blockchains, the cost of computing resources (i.e., \dosformula{G}$_{\dossmallformula{gpu}}$) is the same. 
Specifically,
according to \S\ref{sec_evaluate_mainnet},
there are 2,754,284 (712,565 + 2,041,719) leaf nodes in \mpt{} to be collided for being deepened by \dosattack{}.
To conduct the hash collision for the leaf nodes, according to Lemma~\ref{lem_multitarget} and experimental results in \S\ref{sec_evaluate_cap}, adversaries need to rent 33 RTX3080 GPUs for a period of 12 hours at least (cf. details in Appendix~\ref{sec_app_hashtimecost}).
We obtain the corresponding price of GPU rental from~\cite{gpumarket}, i.e., 0.1 USD/hour for renting a GPU.
Therefore, to launch the attack in \S\ref{sec_evaluate_mainnet}, \dosformula{G}$_{\dossmallformula{gpu}}$ is 39.6 USD (0.1 $\times$ 33 $\times$ 12).
Furthermore, 
since the seven blockchains are compatible with Ethereum and adopt the same gas mechanism,
the units of gas consumed for executing the transactions containing attack payloads are also the same.
Specifically,
during the attack in \S\ref{sec_evaluate_mainnet},
after forwarding attack payloads to the agent contract \colorboxblue{5}, the agent contract \colorboxblue{5} executes its logic to insert crafted leaf nodes (\S\ref{sec_design}) deepening the target 7,425,484 leaf nodes to the layer 15.
As a result, it costs 289,647,227,381 units of gas.
We further acquire the price of cryptocurrency (i.e., \dosformula{Price}$_{\dossmallformula{coin}}$) and the cryptocurrency price of a unit of gas (\dosformula{Price}$_{\dossmallformula{gas}}$) of each blockchain from corresponding dashboards (e.g.,~\cite{coinmarketcap,gasfees1}).
Since \dosformula{Price}$_{\dossmallformula{coin}}$ and \dosformula{Price}$_{\dossmallformula{gas}}$ are volatile over time, we average out them in a period of one week.

Based on the derived parameters in Eq.~\ref{eq_cost0} - Eq.~\ref{eq_cost1},
we list the detailed cost of \dosattack{} on different blockchain in Table~\ref{tab_cost}.
The second and third columns list the price of cryptocurrency and the cryptocurrency price of a unit of gas for each blockchain.
For example, \dosformula{Price}$_{\dossmallformula{coin}}$ of Optimism is 1,812 USD, and \dosformula{Price}$_{\dossmallformula{gas}}$ of Optimism is 9.35$\times$10$^{-8}$.
The fourth, fifth, and sixth columns list \dosformula{G}$_{\dossmallformula{gas}}$, \dosformula{G}$_{\dossmallformula{gpu}}$, and \dosformula{G}$_{\dosattack{}}$ for launching \dosattack{} on different blockchain.
For example, to degrade the performance of Optimism to 47\% of original performance for a period of 10,000 blocks (\S\ref{sec_evaluate_mainnet}), it only costs 39.64 USD for launching \dosattack{}.

\ignore{
Next, we seek to measure the cost of \dosattack{} in seven blockchain platforms with the highest market capitalization, including Ethereum, BSC, Heco, Polygon, Optimism, Avalance, and ETC by Eq.~\ref{eq_cost0} to \ref{eq_cost2}. 
To simplify the problem, we consider applying the actual stats on Ethereum to appraise the attack cost of the remaining platforms. 
These stats mainly relate to the number of leaves that \dosattack{} needs to attack
By analyzing the volume of daily transaction volume~\cite{tx_eth, tx_etc,tx_polygon,tx_avalanche,tx_heco,tx_bsc,tx_optimistic} and daily trading accounts~\cite{addrs_avalanche,addrs_bsc,addrs_etc,addrs_polygon,addrs_eth}, we find that the stats of Ethereum can be used to assess the cost of caps on five platforms other than BSC. 
For BSC, by analyzing the daily transaction volume~\cite{tx_bsc,tx_eth} and daily trading address~\cite{addrs_bsc,addrs_eth}, we believe that the scale of data of BSC is now about four times that of Ethereum. 

Therefore, we conduct the cost of \dosattack{} based on the stats from Ethereum \#14.99M to \#15M blocks. We assume that the attacker's goal is to achieve impacts as described in \S \ref{sec_evaluate_mainnet}.
Table~\ref{tab_cost} enumerates the cost of \dosattack{} per block for seven blockchains platforms. 

First, we interpret how to obtain gas fee \dosformula{G}${_\dossmallformula{gas}}$ by Eq.\ref{eq_cost1}. The \dosformula{Price}$_\dossmallformula{coin}$ and \dosformula{Price}${_\dossmallformula{gas}}$ is the mean between Jun-6-2023 to Jun-12-2023.
\dosformula{Units}${_\dossmallformula{gas}}$ indicates the units of gas required per block by \dosattack{}. 
To obtain the \dosformula{Units}${_\dossmallformula{gas}}$, we conduct statistics on the state modified of accounts and slots of contract storage in Ethereum between \#14.99M to \#15M blocks.
We observe 3,258,711 changes for 712,565 accounts and 4,166,773 changes for 2,041,719 storage slots of contracts. 
Therefore, from \#14.99M to \#15M, we can acquire that \dosattack{} requires attack 71 accounts and 204 slots of contract storage per block. To attack them, we need 28,928,988 (for BSC is 115,715,952) units of gas per block.(cf. details in Appendix \ref{sec_gasguage}). 
Finally, we can acquire the gas fee per block by \dosattack{} in Table~\ref{tab_cost}. 
The gas fee of Ethereum is as high as 1179USD and the gas fee of Optimism is as low as 5.02$\times$10$^{-6}$ USD. 
Note that the active account-based attack strategy can further optimize the costs.

We derive GPUs expense by Eq~\ref{eq_cost2}. The GPUs are leased from vast.ai~\cite{gpu_lease}, the market leader in on-demand GPUs rental. 
We need 33 RTX3080 GPUs for 12 hours' rental and for BSC we need 13 hours' rental (cf. details in Appendix \ref{sec_app_hashtimecost}) and the price of a RTX3080 GPU is 0.1USD/hour on vast.ai. 
The range of blocks is 10,000 blocks (i.e., from \#14.99 million to \#15 million blocks). Therefore, through the Eq~\ref{eq_cost2}, we can obtain the \dosformula{G}${_\dossmallformula{gpu}}$ equal to 4$\times$10$^{-3}$ USD (for BSC is 4.2$\times$10$^{-3}$ USD). With the GPU fee , we can obtain the final costs of the attack. 

In Table.~\ref{tab_cost}, except for Ethereum, the sums of \dosattack{} per block of all other blockchains is below 80USD.
}

\subsubsection{Cost optimization} 
\label{sec_rq3_active}
According to Table~\ref{tab_cost}, the high gas fee leads to the expensive cost, e.g., the cost is over 11M USD on Ethereum. 
Inspired by active accounts (\S\ref{sec_design}), we adopt an optimized strategy to decrease the cost of \dosattack{} by only deepening leaf nodes associated with active accounts, costing less and achieving a trade-off between attack impact and cost of \dosattack{}.
Active accounts are accounts conducting frequent
trades over a period of time (\S\ref{sec_design}), 
hence, the leaf nodes associated with them are the most frequently modified and accessed leaf nodes in \mpt{}.

We further inspect the 2,754,284 leaf nodes deepened by \dosattack{} during the attack in~\S\ref{sec_evaluate_mainnet}.
It shows that 2,103,558 of them (i.e., 361,703 and 1,741,855 leaf nodes in State Trie \colorboxred{3} and  Storage Tries \colorboxred{4}) are only modified and accessed once among transactions of the 10,000 blocks (\S\ref{sec_evaluate_mainnet}).
Compared with the leaf nodes,
the other 650,726 leaf nodes (2,754,284-2,103,558) have been collectively modified and accessed 5,321,926 times.
Hence, if \dosattack{} only deepens the 23.63\% (650,726/2,754,284) leaf nodes, the attack impact of \dosattack{} retains 71.67\% (5,321,926/(2,103,558+5,321,926)) of original attack impact.
Besides, \dosattack{}'s \dosformula{G}$_{\dossmallformula{gas}}$ is only 19.61\% of original \dosformula{G}$_{\dossmallformula{gas}}$ (\S\ref{sec_observation_stateop}).
Please note that, the cost of deepening leaf nodes in State Trie \colorboxred{3} and Storage Tries \colorboxred{4} is different (Appendix~\ref{sec_gasguage}).
We enumerate other cases in Table~\ref{tab_optimized_cost} for launching \dosattack{} on Ethereum.
For example, if \dosattack{} only deepens the leaf nodes modified and accessed no less than 6 times, \dosformula{G}$_{\dossmallformula{gas}}$ of \dosattack{} is 3.5\% of original \dosformula{G}$_{\dossmallformula{gas}}$, and the retained impact is 54.66\% of original attack impact.
As a result, we provide the optimized cost (e.g., \dosformula{G}$_{\dossmallformula{gas}}$) of \dosattack{} on the seven blockchains in Table~\ref{tab_cost} in the cases that \dosattack{} only deepens the leaf nodes associated with the active accounts modified and assessed no less than 6 times.
Since the number of leaf nodes deepened by \dosattack{} decreases, the corresponding \dosformula{G}$_{\dossmallformula{gpu}}$ also reduces to 33 USD accordingly (Appendix~\ref{sec_app_hashtimecost}).
It shows that, by optimized strategies based on active accounts, the cost of \dosattack{} on Polygon reduces from 42,596.22 USD to 1,523.86 USD, retaining 54.66\% of original attack impact.

\begin{table}[!t]
\caption{Cost optimization based on active accounts}
\label{tab_optimized_cost}

\resizebox{0.99\linewidth}{!}{
\begin{tabular}{@{}l|c|c|c@{}}
\toprule
Count & Retained impact
& Retained cost
& \makecell{Optimized \dosformula{G}$_{\dossmallformula{gas}}$ (USD)}
\\
 \midrule \midrule
 1&100.00\% & 100.00\% & 11,808,917.46 \\
 2& 71.67\% & 19.60\% & 2,314,547.82\\
 4 & 58.16\% & 5.76\% & 680,193.64 \\
 6 & 54.66\% & 3.50\% & 413,312.11\\
\bottomrule
\end{tabular}
}
\end{table}

\noindent
\textbf{Baseline comparison.} We compare the cost of creating spam transactions that result in the same attack impact as \dosattack{} with optimized strategies.
Specifically, for each block within the range of \#14.99M to \#15M, we iteratively submitted transactions sampled from Ethereum to the block, ensuring that the number of \mpt{} nodes handled in the submitted transactions aligned with the number of \mpt{} nodes proliferated by \dosattack{}.
As a result, the baseline costs 3,826,037.45 USD on Ethereum, which is 9.25 times higher than \dosattack{} with optimized strategies in Table~\ref{tab_cost}. 

\noindent\textbf{Answer to RQ3}:
\textit{
\dosattack{}'s cost depends on the gas fee, and is reasonable for most cases, and our optimization strategy further minimizes the cost of \dosattack{}.
}

\subsection{Can \dosattack{} threaten blockchain in practice?}
\label{sec_evaluate_testnet}

Compared to demonstrating \dosattack{} in controlled environments~\cite{Li2023LVMT}, we choose to determine whether \dosattack{} can threaten the current blockchain in practice. 
Please note that, unlike previous studies~\cite{ndss_broken_metro,heopartitioning,li2021deter} that only have non-persistent attack impact, the impact of \dosattack{} will be persistent in the blockchain. This is because \dosattack{} persistently proliferates the intermediate nodes in \mpt{}, and persistently exacerbates resource consumption for the maintaining and updating of \mpt{}.
Therefore, due to ethical concerns and inspired by previous studies~\cite{li2021deter,heopartitioning}, we choose to launch \dosattack{} on blockchain testnets, i.e., Ethereum Sepolia testnet~\cite{sepolia_testnet} and BSC testnet~\cite{bsc_testnet}.
Please note that testnets set the closest environment to the practice, and it is built for researchers and developers to conduct experiments without risk to real funds or the main chain~\cite{Ethereum_testnet}.
Considering that there are other developers and researchers who are active in the testnet, we further minimize the potential ethical issues by carefully adjusting attack parameters from scratch to light the attack impact on testnets.

In our evaluation, we synchronize blockchain clients to obtain latest state of the two testnets.
We launch \dosattack{} to exploit the two testnets by following the seven steps of \dosattack{} (\S\ref{sec_implement}).
Specifically, \dosattack{} first extracts the list of all accounts \colorboxblue{1} in the two testnets and \mpt{} of two testnets \colorboxblue{2}.
We next leverage indexing retriever \colorboxblue{3} to retrieve the target leaf nodes and their \dosformula{indexing}.
After computing unit \colorboxblue{4} crafts the leaf nodes to be inserted for proliferating the intermediate nodes, we forward the crafted data to our agent
contracts (i.e., \href{https://sepolia.etherscan.io/address/0xc8f2b352a53ef7f1d5c33f4bc8129d50ae1f199d}{0xc8f2...199d} in Ethereum testnet and \href{https://testnet.bscscan.com/address/0xc06256021c13ea3bdc6053d6422df2c4df60a163}{0xc062...a163} in BSC testnet) \colorboxblue{5} to 
finalize the attack.

\begin{figure}[t]
\small
  \begin{subfigure}{\linewidth}
    \centering
	\includegraphics[width=0.99\linewidth]{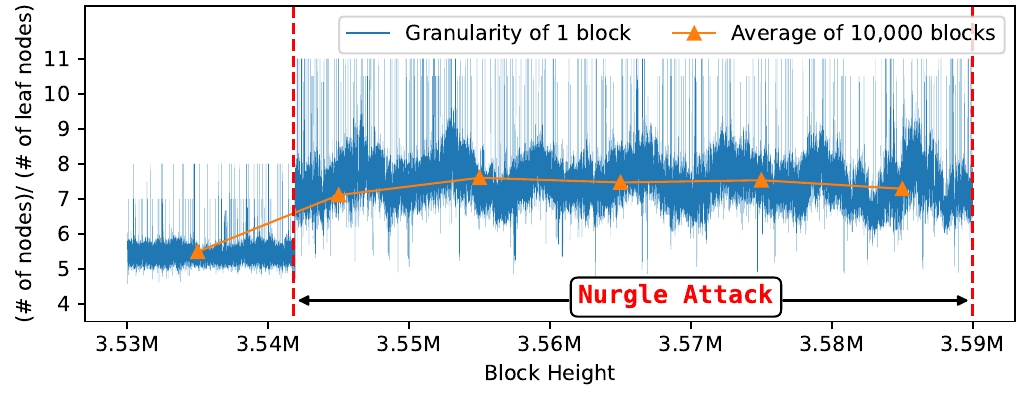}
	\caption{The tendency of involved MPT nodes of Ethereum testnet.}
    \label{fig_uniswapexample_a}
  \end{subfigure}

  \begin{subfigure}{\linewidth}
    \centering
	\includegraphics[width=0.99\linewidth]{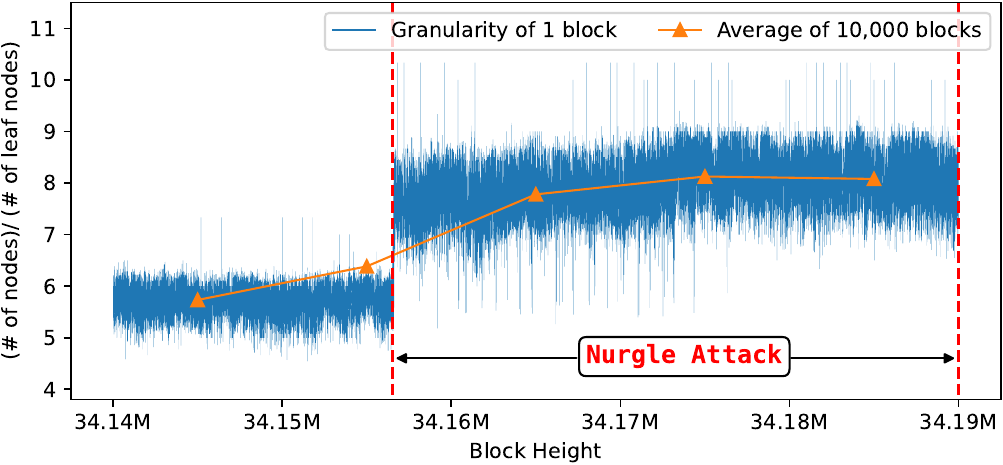}
	\caption{The tendency of involved MPT nodes of BSC testnet.}
    \label{fig_smoothyexample_b}
  \end{subfigure}
  \caption{
During the exploitation, \dosattack{} proliferates intermediate nodes in \mpt{}, and deepens leaf nodes, leading to an increase in the number of \mpt{} nodes to be modified to update a leaf node.
\dosattack{} can cause about 32\% and 39.4\% more \mpt{} nodes to be involved in state modification per block in Ethereum testnet and BSC testnet, respectively.
}
  \label{exp_mpt_node}
\end{figure}

\ignore{
\begin{figure}[!t]
\small
	\centering
	\includegraphics[width=0.49\textwidth]{figs/testnet_exp.pdf}
	\caption{The tendency of nodes in \mpt{} involved
in state modification.
During the exploitation, \dosattack{} proliferates intermediate nodes in \mpt{}, and deepens leaf nodes, leading to an increase in the number of nodes in \mpt{} to be modified to update a leaf node.
\dosattack{} can cause about 32\% more nodes in \mpt{} to be involved in state modification per block.}
	\label{exp_mpt_node}
\end{figure}
}

We launched the attack of \dosattack{} on the Ethereum (resp. BSC) testnet at the block height of \#3,541,798 (May 23, 2023) (resp. \#34,156,452 (Oct. 13, 2023)), and ceased the attack at the block height of \#3.59M (Jun. 1, 2023) (resp. \#34.17M (Oct. 13, 2023)).
During the whole exploitation of \dosattack{}, we forwarded attack payloads to agent
contracts \colorboxblue{5} by 330 (resp. 53) transactions, and we
inserted 100 leaf nodes into State Trie \colorboxred{3} in \mpt{} for each transaction. 
Fig.~\ref{exp_mpt_node} shows our experimental results, and it depicts the tendency of the number of nodes in \mpt{} to be modified to update a leaf node. 
Specifically,
during the exploitation, \dosattack{} proliferates intermediate nodes in \mpt{}, and deepens leaf nodes, leading to an increase in the number of nodes in \mpt{} to be modified to update a leaf node.
As mentioned in~\S\ref{sec_observation_stateop}, the consumed resources (e.g., the time cost) for state modification (e.g., \textbf{OP1-4}) linearly increases with the number of involved nodes.
Hence,
Fig.~\ref{exp_mpt_node} indicates that the state modification is significantly exacerbated by the exploitation of \dosattack{}.
To obtain a comprehensive understanding of the impact of \dosattack{}'s exploitation on the testnet, we further investigate the performance overhead of the blockchain brought by \dosattack{}.
As a result, during the attack against Ethereum (resp. BSC) testnet, 32\% (resp. 39.4\%) more nodes in \mpt{} are involved in state modification per block, besides, \dosattack{} raises the cost time of state modification by 15\% (resp. 18\%). 
We further evaluate the overall performance degradation caused by \dosattack{} by using the Metrics module~\cite{gethmetrics}, which enables us to collect execution information about blockchain clients.
As a result, \dosattack{} results in a 10.7\% (resp. 12.4\%) rise in end-to-end execution time of Ethereum (resp. BSC) testnet.

\noindent\textbf{Answer to RQ4}:
\textit{
\dosattack{} threatens blockchain by causing more nodes involved in state modification, and raises the cost time of the overall blockchain execution.
}

\section{Discussion}
\label{sec_discuss}
\subsection{Practical attack impact of \dosattack{}}
\label{sec_dis_impact}
\dosattack{} can threaten blockchain in seven aspects.

\noindent
i) As the ever-evolving blockchains~\cite{Ethereum_eco}, attack scenarios of \dosattack{} are extremely rich, especially the emerging blockchains.
Taking a newly deployed blockchain as an example, and assuming that its leaf nodes in \mpt{} are in the layer 5, \dosattack{} can deepen a leaf node to the layer 15 (\S\ref{sec_evaluate_cap}), thereby proliferating extra 10 intermediate leaves for the blockchain to updating the leaf node.

\noindent
ii) The impact of \dosattack{} on blockchain is persistent, and the victim blockchain will be impacted by  manipulated \mpt{} in all subsequent blocks by the nodes proliferated by \dosattack{}.
Besides, according to~\S\ref{sec_evaluate_cap}, the attack impact of \dosattack{} can be further exacerbated by adversaries with more powerful computing resources (e.g., GPUs).

\noindent
iii) \dosattack{} delays users in using blockchain and AUX (e.g., flashbot~\cite{weintraub2022flash,li2023demystifying}, infura~\cite{li2021strong}, ENS~\cite{XiaENSIMC22}) in providing services, because \dosattack{} exacerbates resource consumption of blockchain, and increases the time cost for maintaining and updating \mpt{}.
For example, 
an AUX like infura can only provide
their services after they finish the delay of updating the latest
state in \mpt{}.

\noindent
iv) Since \dosattack{} can delay the processing of user transactions, attackers can launch the delay attacks~\cite{delayattack} to threaten the liveness of the layer 2 rollups. Specifically, attackers can utilize \dosattack{} to slow down the confirmation of the transactions for verifying the validity of layer 2 transactions. 

\noindent
v) \dosattack{} threatens the consensus security of blockchain by increasing the execution costs of running blockchain nodes, which subsequently results in reducing the number of nodes participating in the blockchain network~\cite{heopartitioning}.

\noindent
vi) \dosattack{} erodes trust in the affected blockchains, leading to a decline in the value of their cryptocurrencies~\cite{ndss_broken_metro}.

\noindent
vii) The overhead of blockchain execution raised by \dosattack{} (e.g., the 10.7\% overall performance degradation in \S\ref{sec_evaluate_testnet}) can waste the energy of all blockchain nodes.

As mentioned in~\S\ref{sec_evaluate_cap},
computing resources (i.e., GPUs) of adversaries influence the attack impact of \dosattack{}. Concretely, computing resources affect the depth of target leaf nodes in \mpt{} deepened by \dosattack{}, and the number of intermediate nodes proliferated by \dosattack{}, which finally impacts how much the resource consumption in \mpt{} will be exacerbated to impair the blockchain’s performance.
Our evaluation on \dosattack{} suffers some limitations that we do not explore the best attack impact of \dosattack{}, because we choose to evaluate the attack impact of \dosattack{} under a reasonable resource cost of \dosattack{} (\S\ref{sec_evaluate_cap}).
Hence, our experimental results should be considered as the lower bound of the potential attack impact brought by \dosattack{}.

\subsection{On-demand attacks of \dosattack{}}
\label{sec_on_demand}

Instead of one-time short exploitation,
it is also feasible to control \dosattack{} on demand. 
During the on-demand attacks, attackers can craft several deeper leaf nodes controlled by themselves in advance, and only update it to slow down transactions when needed.
We further evaluate the impact and cost of on-demand attacks by examining two distinct on-demand attacks, each targeting users of different contracts. Our experimental results demonstrate that on-demand attackers can slow down user transactions of specified contracts by crafting attack payloads in advance. Besides, while implementing attacks with the same impact (e.g., slowing down all user transactions), the attack cost depends on the logic of involved contracts.
In the following, we elaborate on how we evaluate the two on-demand attacks and the detailed experimental results.
 
\noindent
$\bullet$
In the first attack, the adversary slows down all users of an AMM (Automated Market Maker) contract~\cite{xu2021sok}, where users can exchange two specific tokens with the AMM contract. Since the token balances of the AMM contract will update when users interact with it, the adversary can deepen the two leaf nodes storing the token balances of the AMM contract for the two tokens to slow down all user transactions. The cost of transaction fees is 4.06 USD on Ethereum, and the cost of computing resources is 4.32 USD.

\noindent
$\bullet$
In the second attack, the adversary delays all users of a token contract. Since, for each transaction, only the token balances of the users in the current transaction will update, the adversary needs to deepen all the leaf nodes storing all users' token balances to slow down all user transactions. In our experiments, the token contract has 10,000 users. Hence, the cost of transaction fees is 20,154.95 USD on Ethereum, and the cost of computing resources is 23.56 USD.

\subsection{Feasible mitigations against \dosattack{}}
\label{sec_mitigations}
In the following, we detail the three feasible mitigations, and discuss their advantages and disadvantages.

\noindent
i) \textbf{Verkle tree}~\cite{kuszmaul2019verkle}. Verkle tree mitigates the impact of \dosattack{} by indexing fewer nodes in its tree structure and adopting a swifter authenticated method~\cite{kuszmaul2019verkle}.
First, compared with the prefix tree in MPT structure, the structure of verkle tree is designed to be flatter, which compresses the distance from the leaf node to the root node. The verkle tree alleviates the volume of nodes to be updated and verified associated with \textbf{OP1} and \textbf{OP2} (\S\ref{sec_observation_stateop}). 
Second, verkle tree adopts a polynomial commitment scheme~\cite{boneh2021halo} rather than the hash-style vector commitment of MPT structure~\cite{wood2014ethereum}.
During the verification for the latest state, the polynomial commitment reduces the number of nodes to be verified~\cite{boneh2021halo}. 
Verkle tree has the advantage of consuming fewer resources when accessing a leaf node by involving fewer intermediate nodes. However, the advantage comes at the cost of increased space wastage in Verkle tree, as each node consumes more space. For instance, a branch node in Verkle tree, which includes 256 pointers, occupies nearly 16 times more space than a branch node in \mpt{}, which only has 16 pointers.

\noindent
ii) \textbf{Trimming historical state (EIP-4444)}~\cite{eip4444}. EIP-4444 eases \dosattack{}'s impact by periodically reducing \mpt{}'s size. 
Specifically, \mpt{} will undergo pruning by retaining the state of blockchain for nearly one year. Ethanos~\cite{ethanos} has carried out its implementation in a similar manner. 
Trimming historical state has the advantage of reducing the storage space required by a blockchain node by eliminating a portion of its historical state. However, the mitigation comes with the disadvantage that the blockchain node cannot conduct complete security verification of the whole blockchain state due to the elimination of the historical state.

\noindent
iii) \textbf{New patch for gas mechanism}. We plan to propose a new Ethereum improvement proposal (EIP) for the gas mechanism to defend against \dosattack{}. In our EIP, gas fee of transactions will be equivalent to actual consumed resources, e.g., considering the consumed resources for modifying nodes in \mpt{}.
The mitigation has the advantage of implementing a fairer gas mechanism. The adjusted mechanism ensures that the gas fee for transactions corresponds directly to the actual resources consumed when modifying nodes in the MPT. However, the mitigation can introduce new attack vectors. For instance, an adversary can exploit the adjusted mechanism by proliferating intermediate nodes to popular contracts, thereby increasing gas fees for all users of the contracts.
Hence, it is crucial to examine the implementation of the third mitigation thoroughly.

\subsection{Vulnerability disclosure}
\label{sec_vul_disclose}
At the time of writing, 588 blockchains compatible with the Ethereum ecosystem (\S\ref{sec_background}) are directly under the threats of \dosattack{}. 
Besides, we reveal that
other 153 blockchains compatible with the Polkadot ecosystem~\cite{Polkadot_eco} are also  threatened by \dosattack{}, 
because Polkadot also adopts the MPT structure to handle state storage similar to Ethereum (\S\ref{sec_background}). However, there are two major differences between Polkadot and Ethereum. i) Hash algorithm.
Polkadot utilizes xxhash algorithm~\cite{collet2016xxhash} instead of keccak256 algorithm~\cite{keccak256} in their blockchain design, e.g., they use xxhash to derive the \dosformula{indexing} of its leaf nodes. 
ii) Fee mechanism. Polkadot adopts the weight-based mechanism to cost fee for transactions rather than gas mechanism adopted in Ethereum.
Concretely, transaction fee is adjusted according to the congestion cost of the block. 
We enumerate the above vulnerable blockchains under \dosattack{} in Appendix~\ref{sec_scope}.
We will explore the hash collision of \dosattack{} (\S\ref{sec_design}) against xxhash algorithm and optimize attack parameters against the fee mechanism of Polkadot to further
investigate the vulnerability of Polkadot under \dosattack{} as our future work.

\ignore{
\begin{table}[]\centering
\caption{Response status of seven mainstream blockchains}
  \label{tab_responce}
\begin{tabular}{@{}l|l@{}}
\toprule
\begin{tabular}[c]{@{}l@{}}Blockchain platform\end{tabular} & Response status\\
 \midrule
Ethereum  & Acknowledged the vulnerabilities. \\ \midrule
\begin{tabular}[c]{@{}l@{}}
Binance Smart Chain
\end{tabular}   & \begin{tabular}[c]{@{}l@{}}Acknowledged with the vulnerabilities, and \\rewarded us with bug bounty.\end{tabular} \\ \midrule
Heco      & No reponse     \\ \midrule
Polygon   & Acknowledged the vulnerabilities. \\ \midrule
Optimism  & Acknowledged the vulnerabilities. \\ \midrule
Avalanche & \begin{tabular}[c]{@{}l@{}}Acknowledged with the vulnerabilities, and \\rewarded us with bug bounty.\end{tabular}\\ \midrule
\begin{tabular}[c]{@{}l@{}}
Ethereum\\ Classic
\end{tabular}       & Acknowledged the vulnerabilities.  \\
\bottomrule
\end{tabular}
\end{table}
}

\ignore{
\noindent\textbf{Feasible mitigations against \dosattack{}} 
There are three feasible mitigations. 
i) Verkle tree~\cite{kuszmaul2019verkle}. Verkle tree mitigates the impact of \dosattack{} by indexing fewer nodes in its tree structure and adopting a swifter authenticated method~\cite{kuszmaul2019verkle}.
First, compared with the prefix tree in MPT structure, the structure of verkle tree is designed to be flatter, which compresses the distance from the leaf node to the root node. The verkle tree alleviates the volume of nodes to be updated and verified associated with \textbf{OP1} and \textbf{OP2} (\S\ref{sec_observation_stateop}). 
Second, verkle tree adopts a polynomial commitment scheme~\cite{boneh2021halo} rather than the hash-style vector commitment of MPT structure~\cite{wood2014ethereum}.
During the verification for the latest state, the polynomial commitment reduces the number of nodes to be verified~\cite{boneh2021halo}. 
ii) Trimming historical state (EIP-4444~\cite{eip4444}). EIP-4444 eases the persistent impact of \dosattack{} by periodically reducing the size of \mpt{}. 
Specifically, the \mpt{} will undergo pruning by retaining the state of the blockchain for nearly one year. Ethanos~\cite{ethanos} has carried out its implementation in a similar manner. 
iii) Design a new patch for gas mechanism. We plan to propose a new Ethereum improvement proposal (EIP) for the gas mechanism to defend against \dosattack{}. In our EIP, gas fee of transactions will be equivalent to actual consumed resources, e.g., considering the consumed resources for modifying nodes in \mpt{}.
}

\noindent\textbf{Ethics concerns.} We reported vulnerabilities brought by \dosattack{} to seven mainstream blockchains with high market capitalization before the paper submission. 
At the time of writing, we have received responses from six of them, confirming the vulnerabilities and rewarding us with thousands of USD bounty.
Furthermore, before the publication, we reported the vulnerabilities to all other affected blockchains.
Currently, an additional 16 teams from the newly reported blockchain teams have responded with positive acknowledgments.
We present the details of their responses in Appendix~\ref{sec_app_status}.
Besides, we will release corresponding materials 90 days before the publication following ethical obligations and conference committee requirements.

\ignore{
\hzy{\noindent\textbf{The affecting factors of \dosattack{}.} i) Target Blockchain. Coin prices and gas fee vary from blockchain to blockchain, and this difference can greatly affect the cost of an attack. ii) The data on Blockchain. Data on the blockchain, such as the depth of the MPT or the frequency of access to the account, can determine the consequences of \dosattack{}. iii) Computing resources. The computational resources at the disposal of the adversary can determine the feasibility of \dosattack{}, i.e., how many intermediate nodes to increase. we depict a general attack algorithm considering such factors in Algorithm \ref{alg_attack}.}  
}

\ignore{
\subsection{Threats to validity}
\noindent\textbf{The limitation of \dosattack{}.} While \dosattack{} has the notable advantages mentioned above, it also has some limitations. 

i) The trade-off between Computer resources and attack consequences. Computer resources (i.e., GPUs) constrain the capability of \dosattack{}. Concretely, computational resources affect the depth of MPT that \dosattack{} can manipulate, which finally impacts attack consequences. We do not adopt the best trade-off to maximize the impact of \dosattack{}, but we think that the attack impact can be further exacerbated. 

ii) Gas fee. The adversary needs to be motivated to pay the gas fee to launch the attack. In our threat model, it is worthwhile for the adversaries to pay the cost to launch \dosattack{}. For instance, the adversaries, as a speculator or competitor, makes profits by reducing the efficiency of the target operation and hitting the confidence of the target's users~\cite{ndss_broken_metro}. In some cases, adversaries can exploit project airdrops~\cite{harrigan2018airdrops} to launch \dosattack{} for free. Besides, the Gas fee is a common requirement in these related dos attacks~\cite{ndss_broken_metro,li2021deter}, so the gas fee for \dosattack{} is also reasonable.

\noindent\textbf{The limitation of evaluation.} Our evaluation in \S \ref{sec_evaluate_cap} did not conduct the experiment on Storage Tries \colorboxred{4}. Unlike the State Trie\colorboxred{3}, the key on the Storage Tries \colorboxred{4} usually performs two keccak256 operations. Concretely, the EVM (Ethereum virtual machine) might first perform a keccak256 hash (i.e., SHA3 instruction) for the key, and then perform the second keccak256 hash calculations when inserting the Trie. Although there will also be contracts performing multiple SHA3 instructions in the EVM, we observe that most popular contracts~\cite{popular_con} only perform SHA3 instructions once. The double keccak256 operations are not enough to weaken the feasibility of \dosattack{}, as it can only halve the computing power.
}

\section{Related work}
\label{sec_related}
In this section, we explore closely related studies in four aspects by the attack surfaces of DoS attacks on blockchain.

\noindent\textbf{Consensus network.} 
Based on the fault-tolerant mechanism, consensus network
assists blockchain nodes in achieving an agreement on the latest state. Compromising consensus network incurs the blockchain to violate its consensus functionalities~\cite{chen2022tyr}.
Heo et al.~\cite{heopartitioning} achieve the isolation and disconnection of an Ethereum node from main network by hijacking half of peer connections on the blockchain network. Tran et al.~\cite{tran2020stealthier} propose an attack that isolates Bitcoin nodes via malicious Internet Service Providers. Chen et al.~\cite{chen2022tyr} detect vulnerabilities that cause denial of service in the consensus network through fuzz testing~\cite{peng2018t}.
Prunster et al.~\cite{prunster2022total} mount eclipse attack~\cite{heilman2015eclipse} on Inter Planetary File System (IPFS) by poisoning the node's routing information, so that the node is isolated from the main network. 
Yang et al.~\cite{fluffyosdi} unveil a vulnerability that can collapse the consensus of the blockchain. The root cause of the vulnerability is that the state of the blockchain client implementation in different languages is inconsistent. 
Saad et al.~\cite{saad2021syncattack} propose SyncAttack which uses fluctuations in the Bitcoin network to achieve blockchain forks. 
SDoS~\cite{wang2022sdos} is a DoS attack based on selfish mining. They find that an adversary can launch a 51\% attack~\cite{badertscher2021rational} to destroy the consensus of the blockchain with only 19.6\% of the computing power.

\noindent\textbf{Txpools.} 
Txpools maintain pending transactions from blockchain users, and miners/validators will pack transactions from their txpools to blockchain.
Adversaries sway the security of blockchain by interfering with the transaction packing involved in txpools.
Deter~\cite{li2021deter} paralyzes Ethereum transaction pool by crafting malicious transactions, and prevents users from interacting with the blockchain. 
Wu et al.\cite{wu2020survive} propose a distributed denial-of-service (DDoS) attack against the Bitcoin mining pool with the idea of game theory.
Poster\cite{saad2018poster} denies the memory pool of Bitcoin and traps users into paying higher mining fees. 
Yaish et al.~\cite{yaish2023speculative} leverage the censorship mechanism adopted in Ethereum to craft three classes of DoS attacks, resulting in burdening victims’ computational resources and clogging their txpools.

\noindent\textbf{Auxiliary services.}
Auxiliary services refer to entities facilitating blockchain's efficiency at off-chain.
DoS attackers can refuse users to utilize services provided by AUX.
Li et al.~\cite{li2021strong} propose a DoS attack against RPC services in Ethereum.
It indirectly causes users to be unable to access the Ethereum mainnet by paralyzing the RPC service. 
Nguyen et al.~\cite{NguyenTc2023} discover a flood attack against a single shard, which undermines the performance of blockchain.
They present a scheme by the Trusted Execution Environment (TEE) to counter the flood attack.

\noindent\textbf{Smart contracts.} 
Smart contracts are automation programs executed in blockchain. One of their security issues 
is
under-priced opcodes~\cite{chen2017adaptive}, whose gas cost is substantially less than the consumed resources. 
Attackers can lead blockchain to consume extremely high resources while executing under-priced opcodes.
Chen et al.~\cite{chen2017adaptive} reveal the under-priced opcodes on Ethereum, and attackers can wield these opcodes to launch DoS attacks.
They propose an adaptive gas mechanism to defend against the DoS attacks mentioned by them by auto-adjusting the gas fee of opcodes.
Perez et al.~\cite{ndss_broken_metro} use genetic methods~\cite{whitley1994genetic} to generate smart contracts with low gas and high resource consumption, by leveraging the under-priced instructions of Ethereum. eTainter~\cite{ghaleb2022etainter,grech2020madmax} inspects the DoS attack based on the gas mechanism of smart contracts through static program analysis. 
In addition, the design of \dosattack{} is also motivated by existing studies that exploit historical state~\cite{li2021strong, ndss_broken_metro,chen2017adaptive}.

\section{Conclusion}
\label{sec_conclusion}

We reveal a new attack surface, i.e., state storage, of blockchains.
Besides, we present \dosattack{},
the first DoS attack exploiting state storage. By proliferating intermediate nodes within state storage, \dosattack{} forces blockchains to expend additional resources on
state maintenance and verification, impairing their performance.
We further conduct a comprehensive and systematic evaluation of \dosattack{}.
Experimental results show that \dosattack{} can significantly degrade execution performance of blockchains, under a reasonable financial cost.
Understanding and mitigating the threats brought by \dosattack{} are crucial for ensuring the stability and resilience of blockchain ecosystems. 
Our contributions shed light on the importance of securing the state storage in blockchain, encouraging further research and development to safeguard against \dosattack{} and similar attacks.

\noindent
\textbf{Acknowledgements}
The authors thank anonymous reviewers for their constructive comments.
This work is partly supported by Hong Kong RGC Projects (No. PolyU15222320), National Natural Science Foundation of China under Grant No. 62332004, U22B2029 and U19A2066, the Natural Science
Foundation of Sichuan Province under Grant 2022NSFSC0871,
and the Financial Support for Outstanding
Talents Training Fund in Shenzhen.

\bibliographystyle{IEEEtran}
\bibliography{ref}

\appendices %

\ignore{
\section{Background of blockchain basic concepts}
\label{sec_background_blockchain}

We utilize the implementation of Ethereum~\cite{wood2014ethereum} to introduce the basic knowledge of blockchain. \chain{} is a widely used blockchain platform. According to its specifications~\cite{wood2014ethereum}, the basic structure of its data is the block, which comprises the header and body of the block. The block header involves a reference to the preceding block and the information used for state validation~\cite{wood2014ethereum}, while the block body contains a sequence of transactions. These transactions are signed by blockchain users to transfer funds and communicate with smart contracts.

There are two types of accounts in Ethereum, i.e., contract accounts (CA) and externally owned accounts (EOA). An EOA is controlled by a private key held by a user, while a CA holds the pre-defined logic and persistent variables. A contract is a Turing-complete automation program on Ethereum. The execution of contracts is facilitated by the Ethereum Virtual Machine (EVM), which is an underlying component of Ethereum supporting a set of instructions~\cite{wood2014ethereum}. 
Ethereum's native cryptocurrency is Ether. 

The gas mechanism~\cite{gas_intro} establishes the cost associated with users utilizing the blockchain's resources. 
For example, each operation in Ethereum, which modifies the state, will introduce the cost of gas,
e.g.,
executing contracts and transferring funds (e.g., Ether) will consume gas~\cite{wood2014ethereum}.
}

\ignore{
\begin{table*}[]
\centering
\caption{Time cost for $\dosattack{}$ to collide distinct lengths of desired prefix for an \dosformula{indexing}}
\label{tab_preiamges_full}
\resizebox{0.99\linewidth}{!}{
\begin{tabular}{@{}l|l|l|c@{}}
\toprule
\begin{tabular}[c]{@{}l@{}}Digits\end{tabular} & Crafted data (hex encoding) & $\dosformula{indexing}$ (hex encoding) & Time cost  \\
\midrule \midrule
1 & 0x51b0e4b84afc9c7e935fd1c54409abda46ffff07 & 0x$\underline{1}$09999afd60b733da226a060260c2d9f165f0f33516c5a3230d2b9538ae197e7 & $\dosformula{$<$1s}$  \\
2 & 0x7c0caee5b72d0c71a090c6f02522e89acfffff07 & 0x$\underline{11}$fb9e6a64c5a7c23fb27d08e3d74ea1018fcb0c60d2010cca6c6654dd95e4b8 & $\dosformula{$<$1s}$  \\
3 & 0x8f5ea3c9db43de4143e5717f44dcb43e05d0fe07 & 0x$\underline{111}$0dc62b86ce4609e860381909da5480d46b2e90ea19c5afac287be805c234b & $\dosformula{$<$1s}$  \\
4 & 0xbd6f8cba28b4a0218d0aedbc820a27248ee4fe07 & 0x$\underline{1111}$65e10752633a1ab85c219c618d6c6e6259fdb7c8d2397df9cb72d16e4149 & $\dosformula{$<$1s}$  \\
5 & 0xfccedcfd14858e8b1baf9a497e99af468012b507 & 0x$\underline{11111}$0e0c5d11a713c428c42a03a5a7c55d66c0e61158ef13a63776b94d384d0 & $\dosformula{$<$1s}$  \\
6 & 0x58b91f9cb0ffacae5d95c9e80c373d264993cc06 & 0x$\underline{111111}$078c719cdc5abc2195b645a72ba7dd4d15b12ab9cce3361466c402df69 & $\dosformula{$<$1s}$ \\
7 & 0x89f25e63c12c48a95c22cd4b19585f337a805f06 & 0x$\underline{1111111}$06b6090ca5f7027a7539dc73173e26a35b28645b47d4878db6bbddd62 & $\dosformula{$<$1s}$ \\
8 & 0xa0f0722109f07edd76cc1d2b29cfbc0122ca2b06 & 0x$\underline{11111111}$ce35790ede4c97cc847e55c91c0b3063f5cb56ab6ab93ee76381fa6a & $\dosformula{$<$1s}$ \\
9 & 0x97637e992f835689667a48a0731ce1ebb44dc006 & 0x$\underline{111111111}$0b3bf4ed6dc409fb20328970a0f23dac93761a4347fcd4c84dfe8cc & $\dosformula{53s}$  \\
10 & 0x2f1033b78f8fb3c04259202793d2d89169326d02 & 0x$\underline{1111111111}$ce8bad4529bfef324c88454fe4e72c3cd3974c0249c9adc764802a & $\dosformula{21.68m}$  \\
11 & 0x267a239f1986295e996358a79f57b473ae264d05 & 0x$\underline{11111111111}$00822f67e0319be36eb814ade0ca60c65c62b41641e889eb48ad8 & $\dosformula{2.8h}$\\
12 & 0xd4dfd776a81fcdfa2d601f1efa31a2ad8c21fe06 & 0x$\underline{111111111111}$834eea3006374356f398b29f9b709272533e759348f0bb07aa11 & $\dosformula{12.57h}$ \\
13 & 0xdf04b72b67666a59ff30c06dd079f1850b36ba04 & 0x$\underline{1111111111111}$ca536d3de683a3ab986f631ee733132457eccc0d9a011aa9e55 & $\dosformula{24.58h}$ \\ \bottomrule
\end{tabular}
}
\vspace{3pt}
\end{table*}
}

\section{Hash collision of \dosattack{}}
\subsection{The method of hash collision}
\label{sec_collsion}
In the following, we introduce how we
collide the prefix of a target keccak256 hash value (i.e., \dosformula{indexing} of a leaf node). We assume that adversaries do not have any advanced knowledge of cryptography, and they apply the most primitive brute force hash collision strategy~\cite{joux2007hash}. 
Brute force hash collision refers to exhausting the results of all possible keccak256 hash calculations, until a keccak256 hash value is crafted and satisfies adversaries' requirements. 
Concretely, the adversaries first pick a keccak256 hash value, and then
choose different inputs \dosformula{x$_\dossmallformula{i}$}, where \dosformula{i} $\in \mathbb{N}^{*}$.
After that, the adversaries check whether the prefix of the hash result of \dosformula{x$_\dossmallformula{i}$} matches the prefix of the picked keccak256 hash value. 

\noindent
\textbf{Multi-target hash collision.}
Similarly,
it does not rely on any advanced knowledge of cryptography. 
To conduct a multi-target hash collision, adversaries first pick $\phi$ target keccak256 hash values, and then choose different inputs \dosformula{x$_\dossmallformula{i}$}, where \dosformula{i} $\in \mathbb{N}^{*}$.
After that, the adversaries check whether the prefix of the hash result of \dosformula{x$_\dossmallformula{i}$} matches the prefix of one keccak256 hash value of the $\phi$  picked keccak256 hash values.
The adversaries finalize the multi-target hash collision until all the $\phi$  picked keccak256 hash values are matched.

\ignore{
\subsection{Time cost of hash collision}
\label{sec_app_hashtimecost}

Table~\ref{tab_preiamges_full} displays time cost of \dosattack{} to collide distinct lengths (1 - 13 nibbles) of desired prefix for an \dosformula{indexing}.
}

\ignore{
\begin{table}[!b]
\caption{Cost optimization based on active accounts}
\label{tab_optimized_cost}

\resizebox{0.99\linewidth}{!}{
\begin{tabular}{@{}l|c|c|c@{}}
\toprule
Count & Retained impact
& Retained cost
& \makecell{Optimized \dosformula{G}$_{\dossmallformula{gas}}$ (USD)}
\\
 \midrule \midrule
 1&100.00\% & 100.00\% & 11,808,917.46 \\
 2& 71.67\% & 19.60\% & 2,314,547.82\\
 4 & 58.16\% & 5.76\% & 680,193.64 \\
 6 & 54.66\% & 3.50\% & 413,312.11\\
\bottomrule
\end{tabular}
}
\end{table}
}

\section{Estimating the cost of \dosattack{}}
In this section, we will elaborate on how to estimate the cost of \dosattack{} before launching the attack.
As mentioned in~\S\ref{sec_evaluate_mainnet}, the attack impact can also be estimated.
Hence, the adversaries of \dosattack{} can strategically determine the trade-off of their actual attack, by firstly estimating the cost and attack impact of launching \dosattack{}. 
\subsection{The cost of units of gas}
\label{sec_gasguage}
\dosattack{} needs to cost 289,647,227,381 units of gas during the attack (\S\ref{sec_evaluate_eco}). 
Here we elaborate on how to estimate the cost of units of gas by Eq.~\ref{eq_gas0} - \ref{eq_gas2}. 
Besides, we reuse the same symbols in \S \ref{sec_evaluate_eco} and \S \ref{sec_evaluate_mainnet}. 
Eq.~\ref{eq_gas0} indicates that the units of gas consist of two parts, i.e., the units of gas consumed for inserting leaf nodes in State Trie \colorboxred{3} (\dosformula{Gas}${_\dossmallformula{StateTrie}}$) and the units of gas consumed for inserting leaf nodes in Storage Tries \colorboxred{4} (\dosformula{Gas}${_\dossmallformula{StorageTries}}$). 
According to~\S\ref{sec_design}, the units of gas consumed for inserting single leaf node in State Trie \colorboxred{3} and Storage Tries \colorboxred{4} are different.
Eq.~\ref{eq_gas1} and Eq.~\ref{eq_gas2} adopt the same method to calculate the units of gas for \dosformula{Gas}${_\dossmallformula{StateTrie}}$ and \dosformula{Gas}${_\dossmallformula{StorageTries}}$. 
For Eq.~\ref{eq_gas1} as an example, \dosformula{Gas}${_\dossmallformula{StateTrie}}$ is obtained as the product of the total number of leaf nodes that \dosattack{} needs to insert (i.e., \dosformula{Num}$_{\dossmallformula{StateTrie}}^{\dossmallformula{insert}}$) and the units of gas required to insert single leaf node in State Trie \colorboxred{3} (i.e., 21,000).
The units of gas for inserting a single leaf node in State Trie \colorboxred{3} is 21,000 units of gas, because we insert a leaf node in State Trie \colorboxred{3} by sending 1 wei to a target account (\S\ref{sec_design})~\cite{wood2014ethereum}.

\begin{small}
\useshortskip
\vspace{3pt}
\begin{gather}
\label{eq_gas0}
\dosformula{Units}_{\dossmallformula{gas}}\dosformula{ = }
\dosformula{Gas}_{\dossmallformula{StateTrie}}\dosformula{+}\dosformula{Gas}_{\dossmallformula{StorageTries}} \\
\label{eq_gas1}
\dosformula{Gas}_{\dossmallformula{StateTrie}}\dosformula{ = }
\dosformula{Num}_{\dossmallformula{StateTrie}}^{\dossmallformula{insert}} \times \dosformula{21000} \\
\label{eq_gas2}
\dosformula{Gas}_{\dossmallformula{StorageTries}}\dosformula{ = }
\dosformula{Num}_{\dossmallformula{StorageTries}}^{\dossmallformula{insert}} \times \dosformula{Cost}_{\dossmallformula{StorageTries}}
\end{gather}
\end{small}

The only difference between Eq.~\ref{eq_gas1}, and Eq.~\ref{eq_gas2} is the cost of inserting a leaf node.
This is because, according to \S\ref{sec_design}, we insert a leaf node in Storage Tries \colorboxred{4} by invoking a function of a target contract, hence, we denote the units of gas cost by inserting a leaf node in Storage Tries \colorboxred{4} as \dosformula{Cost}$_{\dossmallformula{StorageTries}}$,
e.g., if we invoke \doscode{transfer()} (\S\ref{sec_design}), it costs 44258 units of gas for inserting a leaf node in \mpt{}.

\begin{table}[!b]
\centering
\vspace{2pt}
\caption{Response status of seven mainstream blockchains}
  \label{tab_responce}
\resizebox{0.99\linewidth}{!}{
\begin{tabular}{@{}l|l@{}}
\toprule
\begin{tabular}[c]{@{}l@{}}Blockchain\end{tabular} & Response status\\
 \midrule \midrule 
Ethereum  & Accepted the vulnerabilities. \\ \midrule 
\begin{tabular}[c]{@{}l@{}}
Binance Smart Chain
\end{tabular}   & \begin{tabular}[c]{@{}l@{}}Accepted with the vulnerabilities, and rewarded\\ us with bug bounty.\end{tabular} \\ \midrule
Heco      & No reponse     \\ \midrule
Polygon   & Accepted the vulnerabilities. \\ \midrule
Optimism  & Accepted the vulnerabilities. \\ \midrule
Avalanche & \begin{tabular}[c]{@{}l@{}}Accepted with the vulnerabilities, and rewarded\\ us with bug bounty.\end{tabular}\\ \midrule
\begin{tabular}[c]{@{}l@{}}
Ethereum Classic
\end{tabular}       & Accepted the vulnerabilities.  \\
\bottomrule
\end{tabular}
}
\end{table}

\subsection{Cost of computing resources}
\label{sec_app_hashtimecost}
According to \S\ref{sec_evaluate_eco}, to deepen
all 2,754,284 leaf nodes,
we need 33 RTX3080 GPUs for the rental of 12 hours, i.e., 39.6 USD. Here we detail how to estimate the amount of GPUs and the required time. 
Besides, we reuse the same symbols in \S \ref{sec_design} and \S \ref{sec_evaluate_cap}. 

First, we need to confirm the number of leaf nodes (i.e., $\phi$) whose \dosformula{indexing} are required to be collided by \dosattack{}. 
We can then estimate the minimal number of hash collisions that are required to be satisfied for \dosattack{}.
According to \S\ref{sec_evaluate_mainnet}, and \textbf{S-3} in \S\ref{sec_design}, the minimal number of hash collisions equals to $\phi \times \frac{\dosformula{d}_{\dossmallformula{nurgle}}\dosformula{-}\dosformula{d}_{\dossmallformula{base}}}{\dosformula{2}}$.
We further adopt the multi-target hash collision strategy (\S\ref{sec_design}) on multiple GPUs to collide the \dosformula{indexing} of all $\phi \times \frac{\dosformula{d}_{\dossmallformula{nurgle}}\dosformula{-}\dosformula{d}_{\dossmallformula{base}}}{\dosformula{2}}$ targets.
As mentioned in~\S\ref{sec_design}, we parallelized conduct the keccak256 hash calculations, and we denoted \dosformula{GPU}$_\dossmallformula{time}$ as the time (hours) for single GPU to finish the multi-target hash collision.
We denote $\theta$ as the calculation counts for colliding one \dosformula{indexing}, and \dosformula{P} as the calculation counts that single GPU can finish in one hour.
Hence, we can estimate \dosformula{GPU}$_\dossmallformula{time}$ by Eq.~\ref{eq_index_colliding}.
For example, 
according to~\S\ref{sec_evaluate_cap}, $\frac{\theta}{\dosformula{P}}$ equals to 24.58 hours to deepen a leaf node to the layer 15.
According to \S\ref{sec_evaluate_eco}, there are 2,754,284 leaf nodes to be collided, i.e., $\phi$ equals to 2,754,284.
According to \S\ref{sec_evaluate_mainnet}, $\dosformula{d}{_{\dossmallformula{nurgle}}}$ and $\dosformula{d}{_{\dossmallformula{base}}}$ equal to 15 and 9.5, respectively.
Hence, $\dosformula{ln(} \phi \times \frac{\dosformula{d}_{\dossmallformula{nurgle}}\dosformula{-}\dosformula{d}_{\dossmallformula{base}}}{\dosformula{2}} \dosformula{)}$ equals to 15.84.
As a result, \dosformula{GPU}$_\dossmallformula{time}$ equals to 389.35 hours, i.e., single GPU needs 389.35 hours to collide the 2,754,284 leaf nodes.
In other words, it costs about 33 GPUs to collide the 2,754,284 leaf nodes in 12 hours (389.35/33).

\begin{small}
\useshortskip
\vspace{3pt}
\begin{gather}
\label{eq_index_colliding}
\dosformula{GPU}_\dosformula{time} \dosformula{=} 
\frac{\theta}{\dosformula{P}} \times \dosformula{ln(} \phi \times \frac{\dosformula{d}_{\dossmallformula{nurgle}}\dosformula{-}\dosformula{d}_{\dossmallformula{base}}}{\dosformula{2}} \dosformula{)}
\end{gather}
\end{small}

\ignore{
Concretely, we assign actual values to the Eq.\ref{eq_index_factors} to \ref{eq_index_colliding}. 
\dosformula{GPU}${_\dossmallformula{num}}$. Through Eq.\ref{eq_index_factors} and Eq.~\ref{eq_multitarget}, $\phi$ equals 15.84 (\dosformula{=}$\ln \dosformula{(}\frac{(2041719+712565)\times(15-9.5)}{2} \dosformula{)}$). Note that we assume that all \dosformula{indexing} is at the layer 13 to get the maximum number of GPUs. Next, we combine Table.\ref{tab_preiamges_full} and Eq.\ref{eq_index_colliding} to get \dosformula{T}$_{indexing}$ as 12 hours (\dosformula{=}$ \frac{24.58}{33}\times15.84$) and \dosformula{T}$_{indexing}$). This means that we can use 33 GPUs to calculate the \dosformula{indexing} required to attack 10,000 blocks (about 33 hours) within half a day.
}

\ignore{
\section{Optimized cost of \dosattack{} on Ethereum}
The cost of \dosattack{} can be significantly decreased when \dosattack{} only deepens the leaf nodes associated with active accounts in \mpt{}.
In such cases, most of the impact of the original attack can be retained.
In Table~\ref{tab_optimized_cost}, we provide the optimized cost of \dosattack{} on Ethereum based on different optimization parameters, and the proportion of the impact of the original attack that retains.
}

\ignore{
\subsection{Exploration the optimal strategy of \dosattack{}}
\label{sec_alg}

\begin{algorithm}[!b]
\caption{Optimal strategies of \dosattack{}}
\label{alg_maxattack}
\footnotesize{
  \KwOut{\dosformula{Strategies}, the optimal strategies to launch \dosattack{}}

\SetArgSty{text} 
\dosformula{Strategies}$_{\dossmallformula{Optimal}}$ $\leftarrow$ $\emptyset$\\
\dosformula{Blockchains} $\leftarrow$ Identify target victim blockchain platforms \\
\dosformula{Data}$_{\dossmallformula{Blockchains}}$ $\leftarrow$ Collect data from victim blockchains \\
\dosformula{GPUs} $\leftarrow$ Assess computing resources of adversaries\\
\dosformula{Height} $\leftarrow$ Decide the block height to launch \dosattack{}\\
\dosformula{Cost}$_{\dossmallformula{Upper}}$ $\leftarrow$ The upper bound of acceptable cost\\

\KwRet \dosformula{Strategies}
}
\end{algorithm}

\begin{algorithm}
\caption{The attack algorithm of \dosattack{}}
\label{alg_attack}
\begin{algorithmic}
\STATE{ \dosformula{$Blockchain \leftarrow$}  Identify target Blockchain of \dosattack{}}
\STATE{\dosformula{$Data \leftarrow$} Collect blockchain data}
\STATE{\dosformula{$GPUs \leftarrow$} Assess the resources of adversary}
\STATE{ \dosformula{$Height \leftarrow$} Choose the block height to attack}
\STATE{\dosformula{$Strategies = Nurgle(Data, GPUs, Height)$} Make attack strategies}
\STATE{\dosformula{$Impact = MAX(Blockchain, Strategies)$} Mount the attack}
\end{algorithmic}
\end{algorithm}

\noindent\textbf{The affecting factors of \dosattack{}.} i) Target Blockchain. Coin prices and gas fees vary from blockchain to blockchain, and this difference can greatly affect the cost of an attack. ii) The data on Blockchain. Data on the blockchain, such as the depth of the MPT or the frequency of access to the account, can determine the consequences of \dosattack{}. iii) Computing resources. The computational resources at the disposal of the adversary can determine the feasibility of \dosattack{}, i.e., how many intermediate nodes to increase. we depict a general attack algorithm considering such factors in Algorithm \ref{alg_attack}.

We provide a general attack algorithm for \dosattack{} in Algorithm. \ref{alg_attack}. The adversary first selects one blockchain as the attack target, then collects blockchain data (such as MPT node information), then evaluates the computing resources needed for the attack. Based on the above data and information, adversaries can make various attack strategies and choose the optimal strategy to launch an attack.
}

\ignore{
\section{The impact and cost of on-demand attacks}
\label{sec_on_demand}
In the following, we elaborate on how we evaluate the two on-demand attacks and the detailed experimental results.
 
\noindent
$\bullet$
In the first attack, the adversary slows down all users of an AMM (Automated Market Maker) contract~\cite{xu2021sok}, where users can exchange two specific tokens with the AMM contract. Since the token balances of the AMM contract will update when users interact with it, the adversary can deepen the two leaf nodes storing the token balances of the AMM contract for the two tokens to slow down all user transactions. The cost of transaction fees is 4.06 USD on Ethereum, and the cost of computing resources is 4.32 USD.

\noindent
$\bullet$
In the second attack, the adversary delays all users of a token contract. Since, for each transaction, only the token balances of the users in the current transaction will update, the adversary needs to deepen all the leaf nodes storing all users' token balances to slow down all user transactions. In our experiments, the token contract has 10,000 users. Hence, the cost of transaction fees is 20,154.95 USD on Ethereum, and the cost of computing resources is 23.56 USD.
}

\ignore{
\section{Three feasible mitigations against \dosattack{}}
\label{sec_mitigations}
In the following, we detail the three feasible mitigations and discuss their advantages and disadvantages.

\noindent
i) \textbf{Verkle tree}~\cite{kuszmaul2019verkle}. Verkle tree mitigates the impact of \dosattack{} by indexing fewer nodes in its tree structure and adopting a swifter authenticated method~\cite{kuszmaul2019verkle}.
First, compared with the prefix tree in MPT structure, the structure of verkle tree is designed to be flatter, which compresses the distance from the leaf node to the root node. The verkle tree alleviates the volume of nodes to be updated and verified associated with \textbf{OP1} and \textbf{OP2} (\S\ref{sec_observation_stateop}). 
Second, verkle tree adopts a polynomial commitment scheme~\cite{boneh2021halo} rather than the hash-style vector commitment of MPT structure~\cite{wood2014ethereum}.
During the verification for the latest state, the polynomial commitment reduces the number of nodes to be verified~\cite{boneh2021halo}. 
Verkle tree has the advantage of consuming fewer resources when accessing a leaf node by involving fewer intermediate nodes. However, the advantage comes at the cost of increased space wastage in Verkle tree, as each node consumes more space. For instance, a branch node in Verkle tree, which includes 256 pointers, occupies nearly 16 times more space than a branch node in \mpt{}, which only has 16 pointers.

\noindent
ii) \textbf{Trimming historical state (EIP-4444)}~\cite{eip4444}. EIP-4444 eases \dosattack{}'s impact by periodically reducing \mpt{}'s size. 
Specifically, \mpt{} will undergo pruning by retaining the state of blockchain for nearly one year. Ethanos~\cite{ethanos} has carried out its implementation in a similar manner. 
Trimming historical state has the advantage of reducing the storage space required by a blockchain node by eliminating a portion of its historical state. However, the mitigation comes with the disadvantage that the blockchain node cannot conduct complete security verification of the whole blockchain state due to the elimination of the historical state.

\noindent
iii) \textbf{New patch for gas mechanism}. We plan to propose a new Ethereum improvement proposal (EIP) for the gas mechanism to defend against \dosattack{}. In our EIP, gas fee of transactions will be equivalent to actual consumed resources, e.g., considering the consumed resources for modifying nodes in \mpt{}.
The mitigation has the advantage of implementing a fairer gas mechanism. The adjusted mechanism ensures that the gas fee for transactions corresponds directly to the actual resources consumed when modifying nodes in the MPT. However, the mitigation can introduce new attack vectors. For instance, an adversary can exploit the adjusted mechanism by proliferating intermediate nodes to popular contracts, thereby increasing gas fees for all users of the contracts.
Hence, it is crucial to examine the implementation of the third mitigation thoroughly.
}

\section{Response status of blockchains}
\label{sec_app_status}

We have reported the vulnerabilities exploited by \dosattack{} to seven mainstream blockchain platforms, including Ethereum, Binance Smart Chain, Heco, Polygon, Optimism, Avalanche, and Ethereum Classic.
In Table~\ref{tab_responce}, we summarize their latest responses for the corresponding vulnerabilities.
Specifically, six blockchains of them, i.e., Ethereum, Binance Smart Chain, Polygon, Optimism, Avalanche, and Ethereum Classic have accepted the vulnerabilities and were exploring
appropriate countermeasures against \dosattack{}. Especially, we have received thousands of USD bounty from Binance Smart Chain and Avalanche.
Furthermore, before the publication, we reported the vulnerabilities to all other affected blockchains.
Currently, an additional 16 teams from the newly reported blockchain teams have responded with positive acknowledgments.
Comprehensive details regarding their responses can be found in our repository at \url{https://github.com/hzysvilla/Nurgle_Oakland24}, and we will keep updating their feedback.

\subsection{Vulnerable blockchains}
\label{sec_scope}
\dosattack{} threatens
588 blockchains compatible with Ethereum and 153 blockchains compatible with Polkadot. We enumerate the blockchains in \url{https://github.com/hzysvilla/Nurgle_Oakland24}. 

\newpage %

\section{Meta-Review}

The following meta-review was prepared by the program committee for the 2024
IEEE Symposium on Security and Privacy (S\&P) as part of the review process as
detailed in the call for papers.

\subsection{Summary}
This paper presents a denial of service attack targeting the blockchains using the Merkle Patricia Trie (MPT) structure. The attack inserts new intermediate nodes to force the network to use more resources to keep the network state. The attack directly applies to popular blockchains such as Ethereum. 

\subsection{Scientific Contributions}
\begin{itemize}[leftmargin=*,topsep=1pt]
\item Identifies an Impactful Vulnerability
\end{itemize}

\subsection{Reasons for Acceptance}
\begin{enumerate}[leftmargin=*,topsep=1pt]
\item The paper finds a novel denial service attack on popular blockchains.
\item The attack is tested on the testnet and acknowledged by the blockchain labs.
\item The paper demonstrates the attack is viable.
\item The paper is well-organized and easy to follow.
\end{enumerate}

\ignore{
\subsection{Noteworthy Concerns} %
\begin{enumerate}[leftmargin=*,topsep=1pt] %
\item Practical Impact of the Attack. The reviewers have raised questions about the practical impact of the attack. A more comprehensive discussion is needed, specifically, on the end-to-end performance degradation to blockchains, its impact beyond the current implementation of Ethereum blockchains, and the performance overhead of the proposed system.
\item Proposed Mitigations. The reviewers also suggested that the paper should discuss the pros and cons (\& efficacy) of the proposed mitigations in more detail.
\item Vulnerability Disclosure. Finally, the reviewers also expressed their concerns about the vulnerability disclosure process in the paper. The paper should follow the best practices of S\&P 2024 as promised in the rebuttal and inform all affected blockchains prior to the publication of the paper.
\end{enumerate}
}

\ignore{
\section{Response to the Meta-Review} %

\noindent
Dear Reviewers,

Thank you so much for providing constructive comments and a
detailed list of concerns. 
The revision changes of our paper have incorporated the suggestions outlined in the rebuttal process as well as the concerns raised in the meta-review.
Please find two attached PDF files, and our revision summary below. 

\noindent
(1) File revised\_paper.pdf. This is the revised version of our paper. %

\noindent
(2) File diff.pdf: This is the diff file that shows the changes made in our revision. Please note that the text marked with blue is the newly added text to address the concerns, while the removed text is marked with red.

\subsection{Revision summary}

\noindent
\textbf{Concern 1. Practical Impact of the Attack}

We explored the practical impact of \dosattack{} as follows.

\begin{itemize}[leftmargin=*,topsep=1pt]
    \item In \S\ref{sec_dis_impact}, we elaborate on how \dosattack{} can threaten blockchains from seven aspects.
    \item In \S\ref{sec_evaluate_testnet}, we evaluate the end-to-end performance degradation to Ethereum and BSC testnets brought by \dosattack{}, i.e., 10.7\% and 12.4\%, respectively.
    \item In \S\ref{sec_on_demand}, we explore the feasibility of on-demand attacks of \dosattack{}, and evaluate the cost and impact of on-demand attacks by two cases.
    \item In \S\ref{sec_dis_impact} and \S\ref{sec_vul_disclose}, we discuss how \dosattack{} can threaten layer 2 rollups and blockchains compatible with the Polkadot ecosystem, respectively, which are both beyond the current implementation of Ethereum blockchains.  
\end{itemize}

\noindent
\textbf{Concern 2. Proposed Mitigations}

In \S\ref{sec_mitigations}, in addition to illustrating the implementation and defense mechanisms against \dosattack{} for each mitigation, we also detail the advantages and disadvantages of each mitigation.

\noindent
\textbf{Concern 3. Vulnerability Disclosure}

At the time of writing, we have reported the vulnerabilities to all affected blockchains (Appendix \ref{sec_scope}).
To inform the vulnerable blockchains, we have sent our revised paper via email. In cases where we cannot find the corresponding email addresses, we have initiated conversations with blockchain teams through social media platforms such as Twitter.
Currently, in addition to the six blockchain teams that acknowledged our vulnerabilities as outlined in our original paper, an additional 16 teams from the newly reported blockchain teams have responded with positive acknowledgments. 
We have accordingly updated the vulnerability disclosure procedure, and responses from vulnerable blockchain teams in \S\ref{sec_vul_disclose} and Appendix~\ref{sec_app_status}.
Comprehensive details
regarding their responses can be found in our repository at \url{https://github.com/hzysvilla/Nurgle_Oakland24}, and we will keep updating their feedback.

\noindent
\textbf{Suggestion 1. Cost of creating spam transactions compared to the values in Table 3}

We conducted the baseline comparison in \S\ref{sec_evaluate_eco}. Experimental results show that creating spam transactions costed 3,826,037.45 USD on Ethereum, which is 9.25 times higher than the value (i.e., 413,345.11 USD) in Table 3.

\noindent
\textbf{Suggestion 2. The introduction is not focused}

We streamlined the technical details in the introduction section to make the motivation more clear.
Besides, we relocated contents pertaining to the background and related work sections from the introduction to make the introduction more self-contained.

\noindent
\textbf{Suggestion 3. Include several historical state exploits as part of the motivation for this paper}

We clarify that the design of \dosattack{} is motivated by existing studies that exploit historical state in \S\ref{sec_related}.

\noindent
\textbf{Suggestion 4. Adjusting the gas fee per cost unit as a mitigation can induce new attack vectors.}

We clarify that the patch for gas mechanism can introduce new attack vectors, and it is crucial to examine the implementation of the corresponding mitigation thoroughly in \S\ref{sec_mitigations}.

Please kindly let us know if the revised paper is satisfactory. Thank you so much for taking your precious time to help improve our paper, and we would be very thankful to hear your further feedback.

Sincerely,

Authors of Submission \#238
}

\end{document}